\numberwithin{equation}{section}
\def\titlerunning#1{\gdef\titrun{#1}}
\def\author#1{\gdef\autrun{\def\and{\unskip, }#1}\gdef\@author{#1}}
\def\address#1{{\def\and{\\\hspace*{18pt}}\renewcommand{\thefootnote}{}%
\footnote {#1}}%
\markboth{\autrun}{\titrun}}
\def\subjclass#1{{\renewcommand{\thefootnote}{}%
\footnote{\emph{Mathematics Subject Classification (2010):} #1}}}
\newtheorem{theorem}{Theorem}[section]
\newtheorem{corollary}[theorem]{Corollary}
\newtheorem{lemma}[theorem]{Lemma}
\newtheorem{proposition}[theorem]{Proposition}
\theoremstyle{definition}
\newtheorem{definition}[theorem]{Definition}
\numberwithin{equation}{section}
\newcommand{\overbar}[1]{\mkern 1.5mu\overline{\mkern-1.5mu#1\mkern-1.5mu}\mkern 1.5mu}
\let\Im\undefined
\let\Re\undefined
\DeclareMathOperator{\Im}{Im \,}
\DeclareMathOperator{\Re}{Re \,}
\DeclareMathOperator{\tr}{tr}
\newcommand{\Z}{ {\mathbb Z} }
\newcommand{\N} {{\mathbb N}}
\newcommand{\C}{\mathbb{C}}
\newcommand{\R}{\mathbb{R}}
\def\Ev#1{{\mathbb E}\left(#1 \right)}
\def\Pr#1{{\mathbb P}\left(#1 \right)}
\def\T{{\mathcal T}} 
\def\id {\mathbbm{1} }
\DeclareMathOperator{\indfct}{\rm 1}
 \def\XXint#1#2#3{{\setbox0=\hbox{$#1{#2#3}{\int}$}
 \vcenter{\hbox{$#2#3$}}\kern-.5\wd0}}
\def\be{\begin{equation}}
\def\ee{\end{equation}} 
\def\Pr#1{ {\mathbb P}{\left(   #1 \right)}}
\def\={\  = \  }   
\def\+{\  + \  }   
\def\-{\  - \  }    
\def\:{\  : \   }    
\begin{document}




\titlerunning{On the ubiquity of the Cauchy distribution in spectral problems}

\title{On the ubiquity of the Cauchy distribution in spectral problems} 

\author{Michael Aizenman  \and  Simone Warzel
}

\date{July 31, 2014}

\maketitle

\address{M. Aizenman: Departments of Physics and Mathematics,  Princeton University, USA.
\and S. Warzel: Zentrum Mathematik, TU M\"unchen, 
 Boltzmannstr. 3, 85747 Garching, Germany.}

\subjclass{Primary 	60E99 ; Secondary 15B52}


\begin{abstract} 
We consider the distribution of the values at real points of random functions which belong to the  Herglotz-Pick   (HP) class of analytic mappings of the upper half plane into itself.  
It is shown that under mild stationarity assumptions the individual values of HP 
functions with singular spectra have a  Cauchy type distribution.    
The statement applies to the diagonal matrix elements of random operators, and holds regardless of the presence or not of   level repulsion, i.e. applies to both random matrix and  Poisson-type spectra.



\end{abstract}


\setcounter{tocdepth}{2} 
{\small
\tableofcontents 
}

\section{Introduction}

In the study of spectral properties of random operators, generically denoted below by $H_\omega$,  one is led to consider random elements of  
a class of functions of a complex variable $z$, which is variably named after Pick (\cite{Do74}) or  Herglotz  (\cite{dk05}).
Included in this class are  functions of the form: 
\begin{eqnarray}  \label{eq:trace}
R_{\omega,n}(z) &=&  \frac{1}{n} \tr \frac{1}{H_{\omega,n} -z} \=  \frac{1}{n} \sum_{j=1}^n \frac{1}{E_{\omega,j}^{(n)}-z} \notag \\ 
\mbox{} \\ 
 \quad   R_{\omega,n}^{\phi}(z) & = &   \langle \phi, \frac{1}{H_{\omega,n}-z} \, \phi\rangle    \= \sum_{j=1}^n \frac{|\langle \phi | \psi_{\omega,j}^{(n)}\rangle |^2}{E_{\omega,j}^{(n)}-z}  \,.  \notag 
\end{eqnarray}  
where $H_{\omega,n}$ are operators acting in spaces of finite dimension, 
or alternatively  $n\times n$ matrices.   In the second example   $\phi$ is a vector in the  space on which $H_{\omega,n}$ acts and the expressions on the right correspond to the operator's spectral representation.  
\\

More generally,   the Herglotz-Pick (HP) class, as defined here\footnote{In a variant of the definition the range of the functions is occasionally restricted to  $ \mathbb{C}^+ $.  Its extension here to   $\overline{\mathbb{C}^+}$   allows to  include  the  degenerate  case of functions of constant real value.},  
consists of analytic functions from the upper half plane $\mathbb{C}^+ :=  \{ z \in \C \, | \, \Im z > 0 \} $   into its closure $\overline{\mathbb{C}^+} = \mathbb{C}^+ \cup{\R}$. 
By the Herglotz
 representation theorem (cf.~\cite{dk05}) each such function admits a unique spectral representation  as 
 \be  \label{eq:F} 
F(z) \=  b+ a z  + \int_\R \left(  \frac{1}{u-z} - \frac{u}{u^2+1}\right) \mu(du)\, . 
\ee
with $a \ge 0$, $b \in \R$, and $\mu$ a non-negative Borel measure on $ \R $, which is referred to as the spectral measure of $ F $, for  which: 
\be \label{eq:HN_norm}
 \int(u^2+1)^{-1} \mu(du)   < \infty   \,. \\  
 \ee

%
%

Of particular interest here are the scaling limits  in which spectra of finite dimensional operators of increasing dimension are studied on a scale of the eigenvalue spacing~\cite{Seba90,AlMa98, BGS1, BBK,KMW, FR05,ASW}.   The functions of interest may be found to converge in an appropriate distributional sense to random HP functions of singular spectrum which in the simplest case consists of simple poles located  along $\R$.  In the latter case, 
 \eqref{eq:F} extends to a random meromorphic functions, whose spectra form a random point process on $\R$.  \\

Our main purpose here is two fold.  One is to  clarify some of the     
relations between  shift invariant scaling limits of point processes and the limits of the corresponding  random HP functions.   The other is to present the general observation that 
 translation invariance, and more specifically `shift amenability', of an HP function with singular spectral measure carries the implication that  the probability  distribution of the boundary values $F(x) := F(x+i0)$ is a Cauchy distribution.   The examples to which this principle applies include scaling limits of   eigenvalue point processes of a number of random matrix models where the spectral statistics are of otherwise quite different nature.   This includes  both limits of  random diagonal matrices without level repulsion, and those of random matrix ensembles within the GXE domains of attraction.   The latter case includes a class of random Wigner matrices for which the result is established through a combination of the general criteria derived here with previous analytical results derived in \cite{ESY08,TV11,EY12} on the convergence of the local law  to the scaling limit of the GUE ensemble.  \\

 The topics discussed here are of relevance for  
 quantum transport in mesoscopic quantum systems.  In that context, an argument for the general appearance of the Cauchy distribution was  first presented   by P.A.\ Mello 
 \cite{Mel94}~\footnote{We thank Y.V. Fyodorov for alerting us to the references.},  
 who proposed also an  extension of this principle to a somewhat less universal law concerning the limiting  (joint) distribution of arbitrary size ($k\times k$) resolvent subblocks  of   random matrices of much larger size ($n \times n$, with $n \gg k$)).   Support for some of Mello's reasoning was presented  by P.W. Brouwer \cite{Brow95}, who pointed out that also the statement is strictly true  within a Lorentzian matrix ensemble, where it holds for any $k\le n$, and in \cite{FS}, \cite[Ch.IV]{FY1} and \cite[App.~A]{FY2} using supersymmetric calculations on other GXE ensembles in the large $n$ limit.  Other, more recent results are mentioned in Section~\ref{sec:examples}.

\section{Cauchy distribution in shift amenable HP functions}  \label{sec:Cauchy} 

\subsection{Definition and main result}

It is of relevance to recall here the following general result.

\begin{proposition}[de la Vall\'ee Poussin, see eg.\ \cite{dk05,Duren}]  \label{thm:dlVP}
For any function $ F(z)$ 
in the HP class the limit  
\be 
F(x+i 0)\ := \  \lim_{\eta \downarrow 0}  F(x+i\eta)  
\ee 
exists for Lebesgue - almost every $x\in \R$.    
\end{proposition}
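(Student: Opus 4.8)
The plan is to reduce the statement to the classical Fatou boundary-value theorem for bounded analytic functions, by first replacing $F$ with the auxiliary function
\[
 g(z)\ :=\ \frac{1}{F(z)+i}\,,\qquad z\in\mathbb{C}^+ .
\]
Since $\Im\big(F(z)+i\big)=\Im F(z)+1\ge 1$, the denominator never vanishes on $\mathbb{C}^+$, so $g$ is analytic there and $|g(z)|\le 1/\Im\big(F(z)+i\big)\le 1$; thus $g$ is a bounded analytic function on $\mathbb{C}^+$ which is nowhere zero, in particular not identically zero. The point of the substitution is that $F$ is recovered from $g$ by the M\"obius inversion $F(z)=g(z)^{-1}-i$, so it suffices to prove that the vertical limit $g(x+i0):=\lim_{\eta\downarrow 0}g(x+i\eta)$ exists \emph{and is nonzero} for Lebesgue-almost every $x\in\mathbb{R}$.

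First I would invoke Fatou's theorem for the half-plane: every bounded analytic function on $\mathbb{C}^+$ admits non-tangential (hence in particular vertical) boundary limits at almost every point of $\mathbb{R}$; applied to $g$ this gives the existence of $g(x+i0)$ for a.e.\ $x$, with $|g(x+i0)|\le 1$. It then remains to discard the set on which $g(x+i0)=0$. For this I would appeal to the boundary uniqueness theorem (Lusin--Privalov): a bounded analytic function on $\mathbb{C}^+$ that is not identically zero cannot have non-tangential boundary values vanishing on a set of positive measure (equivalently $\int_\mathbb{R}\frac{\log^-|g(x+i0)|}{1+x^2}\,dx<\infty$, by Jensen's inequality). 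Hence $g(x+i0)\neq 0$ for a.e.\ $x$, and combining the two facts, for a.e.\ $x$ the limit $g(x+i0)$ exists in $\mathbb{C}\setminus\{0\}$, so that $F(x+i0)=g(x+i0)^{-1}-i$ exists (in $\overline{\mathbb{C}^+}$), which is the assertion.

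I expect the main obstacle to be precisely this control of the exceptional set $\{\,g(x+i0)=0\,\}$, i.e.\ of the set of points at which the non-tangential limit of $F$ would be $\infty$: Fatou's theorem by itself only produces a boundary function valued in the closed unit disc, and without a boundary uniqueness input one cannot rule out that $g(x+i0)=0$ on a set of positive measure, which is exactly the situation in which $F$ has no well-defined boundary value; the remaining ingredients (the algebraic passage to $g$, and Fatou's theorem) are routine. A more hands-on alternative stays on $\mathbb{R}$: inserting the Herglotz representation \eqref{eq:F} gives $\Im F(x+i\eta)=a\eta+\int_\mathbb{R}\frac{\eta}{(u-x)^2+\eta^2}\,\mu(du)$, which up to normalization is the Poisson integral of $\mu$ together with a term vanishing as $\eta\downarrow 0$, while $\Re F(x+i\eta)-b-ax=\int_\mathbb{R}\big(\tfrac{u-x}{(u-x)^2+\eta^2}-\tfrac{u}{u^2+1}\big)\,\mu(du)$ is a regularized conjugate Poisson integral of $\mu$. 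The a.e.\ convergence of the former as $\eta\downarrow 0$ is the Lebesgue differentiation theorem for Borel measures, and that of the latter is the classical a.e.\ existence of the (truncated) Hilbert transform of a finite measure; here the normalization \eqref{eq:HN_norm} is used only to localize: near any bounded interval one splits $\mu$ into a finite part, to which those classical statements apply directly, plus a remote part whose contribution is smooth up to $\mathbb{R}$ there, the two subtracted terms cancelling the otherwise non-integrable tails. In this second route the delicate ingredient is the conjugate-function part, i.e.\ the behaviour of $\Re F$.
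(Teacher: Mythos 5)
The paper does not prove Proposition~\ref{thm:dlVP} at all: it is recalled as a classical result with a pointer to \cite{dk05,Duren}, so there is no in-paper argument to compare against. Your proof is the standard one found in those references and it is correct: the M\"obius substitution $g=1/(F+i)$ produces a bounded, nowhere-vanishing analytic function on $\C^+$, Fatou's theorem gives $g(x+i0)$ a.e., and the Riesz--Privalov boundary uniqueness theorem (equivalently $\log|g(x+i0)|$ being locally integrable against $(1+x^2)^{-1}dx$) excludes a positive-measure zero set, so $F(x+i0)=g(x+i0)^{-1}-i$ exists a.e.; you correctly identify that last step as the only non-routine ingredient. Your alternative route via the Poisson and conjugate-Poisson integrals of $\mu$ is also a legitimate classical proof, at the cost of importing the a.e.\ existence of the Hilbert transform of a finite measure.
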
 

\begin{definition} 
A   measurable function  $K: \R \mapsto \C $ will be said here to be {\it shift amenable}  if  there is a probability measure $\nu$ on $\C$ (supported necessarily on  its range's closure $\overbar {\text{Ran }   K} $)  such that  for any continuous  bounded  function $\Psi :  \overbar {\text{Ran }   K} \mapsto \C$  the following limit exists and satisfies
\be \label{Llim}
\lim_{L\to \infty} \frac{1}{L} \int_{-L/2}^{L/2} \Psi(K(x) ) \, dx \ = \  \int_\C \Psi( w ) \, \nu(dw)  \  =: \  \nu(\Psi) \,.  
\ee 
We refer to $\nu\equiv\nu_K$ as  {\it $K$'s distribution under shifts}. 
\end{definition}

In other words, a function is shift amenable if when sampled uniformly over the range $[-L/2,L/2]$, with $L\to \infty$,  the distribution of the values of $K(x)$ is asymptotically described by a probability measure $\nu$ on $\overbar {\text{Ran }   K}$.  \\ 

As it is noted in Section~\ref{subset:stationary}
shift amenable functions appear  naturally among the {\it typical realizations} of random functions with shift invariant law.    The following statement is however deterministic in the sense that it applies to every shift amenable function.

%
  
\begin{theorem} \label{thm:Cauchy}
Let $F(z)$ be a HP function whose boundary values satisfy: 
\begin{enumerate} 
\item 
 $\Im F(x+i0) \= 0 \qquad \mbox{ for Lebesgue almost every $x\in \R$.}  $     
\item  $F_0(x):=F(x+i0)$ \qquad is shift amenable.   
\end{enumerate} 
Then under shifts $F_0(x)$  has a  Cauchy distribution.  
\end{theorem}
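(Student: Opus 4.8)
The plan is to compute the Stieltjes (Borel) transform of the distribution $\nu=\nu_{F_0}$ and recognize it as that of a Cauchy law. By hypothesis~(1) the values $F_0(x)$ are real for Lebesgue-a.e.\ $x$, so $\nu$ is a Borel probability measure on $\R$, and it suffices to prove that its Stieltjes transform $\Phi(w):=\int_\R (\lambda-w)^{-1}\,\nu(d\lambda)$, analytic on $\C\setminus\R$, satisfies $\Phi(w)=(\zeta-w)^{-1}$ for all $w$ in the lower half-plane $\C^-:=\{z\in\C\,|\,\Im z<0\}$ and some fixed $\zeta\in\overline{\C^+}$. Indeed, $(\zeta-w)^{-1}$ is precisely the Stieltjes transform on $\C^-$ of the Cauchy distribution centred at $\Re\zeta$ with width $\Im\zeta$ (a point mass at $\Re\zeta$ when $\Im\zeta=0$), and a probability measure on $\R$ is determined by its Stieltjes transform on $\C^-$.

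Fix $w\in\C^-$. The map $\lambda\mapsto(\lambda-w)^{-1}$ is continuous and bounded by $|\Im w|^{-1}$ on $\overline{\C^+}\supseteq\overbar{\text{Ran }F_0}$, so shift amenability of $F_0$ (the Definition, applied with $\Psi(\lambda)=(\lambda-w)^{-1}$) gives $\Phi(w)=\lim_{L\to\infty}L^{-1}\int_{-L/2}^{L/2}(F_0(x)-w)^{-1}\,dx$. The structural input is that $G_w(z):=(w-F(z))^{-1}$ is again an HP function, and in fact a \emph{bounded} one: since $F$ maps $\C^+$ into $\overline{\C^+}$ and $\Im w<0$ one has $|w-F(z)|\ge|\Im w|$ on $\C^+$, so $G_w$ is analytic there, takes values in $\C^+$, and $|G_w|\le|\Im w|^{-1}$. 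By Proposition~\ref{thm:dlVP} and continuity its boundary values are $G_w(x+i0)=(w-F_0(x))^{-1}$ for a.e.\ $x$, whence $\Phi(w)=-\lim_{L\to\infty}L^{-1}\int_{-L/2}^{L/2}G_w(x+i0)\,dx$.

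The heart of the matter — and the step I expect to require the most care — is the following lemma: \emph{if $G$ is a bounded HP function whose boundary values possess a Ces\`aro mean $\beta:=\lim_{L\to\infty}L^{-1}\int_{-L/2}^{L/2}G(x+i0)\,dx$, then $\beta=\lim_{y\to\infty}G(iy)$.} Since a bounded HP function has $\Re G$ and $\Im G$ bounded harmonic on $\C^+$, it is recovered from its boundary data by the Poisson integral, $G(iT)=\int_\R \frac1\pi\frac{T}{x^2+T^2}\,G(x+i0)\,dx$, with no contribution from infinity (boundedness excludes both a linear term and a singular part in the representing measures). The lemma thus reduces to the Abelian statement that, for $f\in L^\infty(\R)$ with Ces\`aro mean $\beta$, the Cauchy--Poisson means $\int_\R\frac1\pi\frac{T}{x^2+T^2}f(x)\,dx$ converge to $\beta$ as $T\to\infty$. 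This I would prove by one integration by parts: the integral equals $\frac1\pi\int_\R\frac{2Tx}{(x^2+T^2)^2}F_f(x)\,dx$ with $F_f(x)=\int_0^x f$, so by oddness of the kernel only the odd part $x\mapsto\tfrac12(F_f(x)-F_f(-x))=\tfrac12\int_{-x}^x f=\beta x+x\,o(1)$ matters; after the rescaling $x=Tu$ the leading piece integrates to $\beta$ and the remainder vanishes by dominated convergence, using $f\in L^\infty$.

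Applying the lemma to $G=G_w$ — legitimate since the Ces\`aro mean exists by the shift amenability invoked above — gives $\Phi(w)=-\lim_{y\to\infty}(w-F(iy))^{-1}$ for every $w\in\C^-$. Because $\Im\Phi(w)=(\Im w)\int_\R|\lambda-w|^{-2}\,\nu(d\lambda)<0$, this limit is nonzero, so $w-F(iy)$ converges to a finite nonzero limit; hence $F(iy)$ converges, as $y\to\infty$, to some finite $\zeta\in\overline{\C^+}$ ($\overline{\C^+}$ being closed and $F$ being HP). It follows that $\Phi(w)=-(w-\zeta)^{-1}=(\zeta-w)^{-1}$ for all $w\in\C^-$, which by the first paragraph identifies $\nu$ as the Cauchy distribution with centre $\Re\zeta$ and width $\Im\zeta$. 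The only genuinely delicate points are those flagged in the lemma — the absence of mass at infinity in the Poisson representation and the Abelian theorem for the heavy-tailed Cauchy kernel — together with the observation that it is the non-degeneracy of $\nu$ that forces the interior limit $\lim_{y\to\infty}F(iy)$ to exist in the first place.
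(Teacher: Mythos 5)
Your proof is correct, and its engine is the same as the paper's: both arguments reduce the theorem to the Abelian statement that for a \emph{bounded} analytic function on $\C^+$ whose boundary values have a Ces\`aro mean, that mean equals the limit at $i\infty$ (your ``key lemma'' is exactly the content of the paper's Lemma~\ref{lem:g} combined with the Poisson representation step \eqref{eq:useof_g} in Lemma~\ref{lem:lim}; the paper proves the Abelian part by a layer-cake/Abel's-lemma argument over monotone kernels, you prove it by integration by parts and rescaling --- both are fine). Where you genuinely diverge is in the family of test functions $\Psi$ used to identify the limiting law: the paper composes $F$ with $\Psi(w)=e^{itw}$, $t>0$, obtaining the characteristic function of $\nu_{F_0}$ on the half-line and then extending to $t<0$ by complex conjugation using hypothesis~(1); you compose with the resolvents $\Psi(\lambda)=(\lambda-w)^{-1}$, $w\in\C^-$, obtaining the Stieltjes transform $\Phi(w)=(\zeta-w)^{-1}$, with hypothesis~(1) entering instead to place $\nu_{F_0}$ on $\R$ so that $\Phi$ determines it. Your route has one small structural advantage: the existence of $\zeta=\lim_{y\to\infty}F(iy)$ falls out of the non-vanishing of $\Phi(w)$ (since $\Im\Phi(w)<0$ on $\C^-$), whereas the paper must establish this limit separately by applying its lemma to $\Psi(w)=-(w+i)^{-1}$ (equation \eqref{eq:Gamma2}). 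The paper's characteristic-function route, on the other hand, makes transparent exactly which half of the conclusion survives without hypothesis~(1) --- namely $\int e^{itw}\nu_{F_0}(dw)=e^{it\Gamma}$ for $t>0$ --- a remark the authors use elsewhere. No gaps; the points you flag as delicate (absence of a singular part and of a linear term in the representation of a bounded harmonic function, and the heavy-tailed Abelian theorem) are indeed the only delicate ones, and you handle both correctly.
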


By a Cauchy distribution we refer here to a probability law, parametrized by $\Gamma \in \overline{\mathbb{C}^+}$,  of the form
\be  \label{eq:Cauchy}  
\Pr{dF} \ = \ \pi^{-1} \frac{\Im \Gamma\, dF}{(F-\Re \Gamma)^2+ (\Im \Gamma)^2} \, ,  
\ee 
which for $\Gamma \in \R$ is to be interpreted as a $\delta$-measure located at $\Re \Gamma$.  
 We refer to   $\Gamma \in \C^+$ as the Cauchy distribution's `analytic baricenter'.   
More is said on its value in the present context in Theorem~\ref{lem:Aconst} below.\\

Condition~1.) in Theorem~\ref{thm:Cauchy}  is equivalent to the statement that the spectral measure $\mu$ of $ F $ has no absolutely continuous component, as the latter is in general  given by $\pi^{-1} \Im F(x+i0) \, dx$.       In the theorem's proof use is made of  the following  auxiliary statements.  
In the first one, the range of $K$  is limited to  $\overbar \C^+ $  in order to make the statement applicable to functions such as $\Psi(z) =  1/(z+i)$.

\begin{lemma}  \label{lem:g} 
If  $K$  is a {\it shift amenable} function with range $\text{Ran } K = \overbar \C^+ $ then  for any 
bounded continuous function $\Psi : \overbar \C^+ \mapsto \C$, and any monotone decreasing   $g:\R_+ \mapsto \R_+$   
satisfying the  normalization condition $\int_\R g(|u|) \, du =1$: 
  \be 
\nu_{K}(\Psi)
 = \lim_{\eta\to \infty}  \int_\R \Psi(K(u)) \, g(|u-x|/\eta) \, \frac{du}{ \eta } \,  ,   \ 
 \ee 
 where the limit does not depend on $x\in \R$.  
\end{lemma}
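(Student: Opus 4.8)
The plan is to express the rescaled kernel $u\mapsto \eta^{-1}g(|u-x|/\eta)$ as a continuous superposition of normalized ``box'' averages whose widths tend to infinity, and then to apply the shift--amenability hypothesis \eqref{Llim} to each box. Fix a bounded continuous $\Psi$ on $\overline{\C^+}$ and set $f:=\Psi\circ K$, a bounded measurable function with $\|f\|_\infty\le\|\Psi\|_\infty$; for $L>0$ and $y\in\R$ let $A_L(y):=L^{-1}\int_{y-L/2}^{y+L/2}f(u)\,du$ denote the box average of width $L$ centered at $y$. The first, essentially trivial, step is to upgrade \eqref{Llim}, which concerns $A_L(0)$, to arbitrary centers: since the symmetric difference of $[y-L/2,y+L/2]$ and $[-L/2,L/2]$ has measure at most $2|y|$, one has $|A_L(y)-A_L(0)|\le 2|y|\,\|\Psi\|_\infty/L$, whence $\lim_{L\to\infty}A_L(y)=\nu_K(\Psi)$ for every $y\in\R$.

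The second step is the layer--cake representation of $g$. As $g$ is decreasing on $\R_+$, $g(r)=\int_0^\infty\mathbbm{1}\{g(r)>s\}\,ds$, and for each $s>0$ the superlevel set $\{r\ge0:g(r)>s\}$ is an interval anchored at the origin, of length $\rho(s):=\sup\{r\ge0:g(r)>s\}$ (with $\sup\emptyset=0$); $\rho$ is itself decreasing. By Fubini the normalization $\int_\R g(|u|)\,du=1$ is equivalent to $\int_0^\infty 2\rho(s)\,ds=1$, and in particular $\rho(s)<\infty$ for every $s>0$. A second application of Fubini --- justified since $|f|\le\|\Psi\|_\infty$ and $\eta^{-1}\int_\R g(|u-x|/\eta)\,du=1$ --- then yields the key identity
\be
\frac1\eta\int_\R \Psi(K(u))\,g\!\left(\frac{|u-x|}{\eta}\right)du
\;=\;\int_0^\infty\frac1\eta\!\int_{|u-x|<\eta\rho(s)}\!\! f(u)\,du\;ds
\;=\;\int_0^\infty 2\rho(s)\,A_{2\eta\rho(s)}(x)\,ds .
\ee

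The third step is to let $\eta\to\infty$ inside the $s$--integral by dominated convergence. For $s$ with $\rho(s)>0$ the width $2\eta\rho(s)\to\infty$, so by the first step $A_{2\eta\rho(s)}(x)\to\nu_K(\Psi)$; for $s$ with $\rho(s)=0$ the integrand vanishes identically. All integrands are bounded by $2\rho(s)\,\|\Psi\|_\infty$, which is integrable over $(0,\infty)$ precisely because $\int_0^\infty 2\rho(s)\,ds=1$. Hence the left-hand side converges to $\nu_K(\Psi)\int_0^\infty 2\rho(s)\,ds=\nu_K(\Psi)$, and this holds for every $x\in\R$; in one stroke this gives the existence of the limit, its value $\nu_K(\Psi)$, and its independence of $x$.

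I expect the only genuine subtleties to be measure-theoretic: making the layer--cake identity and the two Fubini applications precise when $g$ is merely monotone (hence possibly discontinuous, and possibly unbounded near $0$), and checking integrability of the dominating function $s\mapsto 2\rho(s)\|\Psi\|_\infty$ --- which is where the hypothesis $\int_\R g(|u|)\,du=1$ is used essentially and not merely as a normalization. The restriction $\text{Ran }K=\overline{\C^+}$ plays no role beyond delimiting the class of admissible $\Psi$, e.g.\ so as to accommodate $\Psi(z)=1/(z+i)$.
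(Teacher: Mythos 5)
Your proof is correct and follows essentially the same route as the paper's: the paper reduces to the box kernel $g=2^{-1}\id[|x|<1]$ (where the claim is the definition of shift amenability, modulo the easy recentering estimate you spell out) and then invokes the layer-cake representation of a monotone $g$ to superpose such boxes, which is precisely your key identity plus dominated convergence. You have simply filled in the details the paper leaves as "a standard application of Abel's lemma."
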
 
\begin{proof}   
For $g(x) = 2^{-1}   \id[|x| <1]$  the statement holds by  the definition of shift amenability.    The extension to more general $g$ is by a standard application of Abel's lemma, which can be deduced through  the `layer-cake' representation:  
$g(t) = \int_0^\infty  \id[ g(t) \ge \tau]\,  d \tau $.    
\end{proof} 
 
\begin{lemma} \label{lem:lim} 
Let $ F(z): \C^+ \mapsto \overline{\mathbb{C}^+} $  be a  Herglotz - Pick function whose boundary value function $F_0(x):= F(x+i0)$  
is shift amenable.   
Then the following limits exist and satisfy:
\begin{enumerate}
\item  for any  bounded continuous $\Psi : \overline{\C^+} \to \C $ which is analytic on $ \C^+$, and any $x\in \R$:  
\be \label{eq:name} 
  \nu_{F_0}(\Psi)  \ = \ \lim_{\eta \to \infty} \Psi(F(x+i\eta)) \,  , 
\ee 
\item  
for  every $x\in \R$ (which however does not affect the limit):   
\be \label{eq:Gamma2} 
\lim_{\eta \to \infty} F(x+i\eta) \=  \left\{  \int \frac{\nu_{F_0}(dw)}{w +i}   \right\}^{-1} -i \  =: \  \Gamma  \, . 
\ee 
\end{enumerate} 
 \end{lemma}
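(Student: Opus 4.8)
The plan is to derive both assertions from the Poisson integral representation of bounded analytic functions on $\C^+$, combined with Lemma~\ref{lem:g} and Proposition~\ref{thm:dlVP}.

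For part~1, fix a bounded continuous $\Psi:\overline{\C^+}\to\C$ that is analytic on $\C^+$, and set $G:=\Psi\circ F$. Then $G$ is bounded and analytic on $\C^+$, being a composition: a non-constant HP function maps $\C^+$ into $\C^+$ by the open mapping theorem, while the degenerate case $F\equiv b\in\R$ is trivial. Since $G$ is bounded it has non-tangential boundary values $G_0\in L^\infty(\R)$ a.e.\ and is recovered from them by the Poisson integral,
\[ G(x+i\eta)=\int_\R P_\eta(x-t)\,G_0(t)\,dt,\qquad P_\eta(s)=\frac{1}{\pi}\,\frac{\eta}{s^2+\eta^2}, \]
boundedness being exactly what excludes an extra linear term $c\eta$. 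By Proposition~\ref{thm:dlVP} the radial limits $F_0(t)=F(t+i0)$ exist a.e., and the continuity of $\Psi$ then forces $G_0=\Psi\circ F_0$ a.e. With $g(u):=\pi^{-1}(u^2+1)^{-1}$ --- monotone decreasing on $\R_+$, normalized by $\int_\R g(|u|)\,du=1$, and satisfying $P_\eta(s)=\eta^{-1}g(|s|/\eta)$ --- this rewrites as
\[ \Psi(F(x+i\eta))=\int_\R \Psi(F_0(t))\,g(|x-t|/\eta)\,\frac{dt}{\eta}. \]
Passing to $\eta\to\infty$ and invoking Lemma~\ref{lem:g} (whose proof applies verbatim as soon as $\text{Ran}\,F_0\subseteq\overline{\C^+}$, which is all that is needed, since $\Psi$ is continuous on $\overline{\C^+}$) gives $\lim_{\eta\to\infty}\Psi(F(x+i\eta))=\nu_{F_0}(\Psi)$, with the limit independent of $x$.

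Part~2 follows by specializing to $\Psi(z)=(z+i)^{-1}$, which is bounded by $1$, continuous on $\overline{\C^+}$ (its only pole, at $-i$, lies outside), and analytic on $\C^+$. Part~1 then gives
\[ \lim_{\eta\to\infty}\frac{1}{F(x+i\eta)+i}=\int\frac{\nu_{F_0}(dw)}{w+i}. \]
To invert this relation I would observe that for every $w\in\overline{\C^+}$ one has $\Im\,(w+i)^{-1}=-(\Im w+1)/|w+i|^2<0$; hence, $\nu_{F_0}$ being a probability measure supported on $\overline{\text{Ran}\,F_0}\subseteq\overline{\C^+}$, the right-hand side has strictly negative imaginary part and in particular does not vanish. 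Since $F(x+i\eta)+i\neq0$ for every $\eta>0$, taking reciprocals yields $\lim_{\eta\to\infty}F(x+i\eta)=\{\int\nu_{F_0}(dw)/(w+i)\}^{-1}-i=\Gamma$, again independent of $x$; a short computation with the M\"obius map $w\mapsto(w+i)^{-1}$ further shows $\Gamma\in\overline{\C^+}$, consistent with its role as analytic baricenter.

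The step I expect to require the most care is the claim that the bounded analytic function $\Psi\circ F$ is recovered \emph{exactly} as the Poisson integral of its a.e.\ boundary values. On the upper half plane a bounded harmonic function is a priori only of the form $(x+i\eta)\mapsto\int P_\eta(x-t)h(t)\,dt+c\eta$ with $h\in L^\infty(\R)$ and $c\in\R$, and it is boundedness --- not mere membership in the HP class --- that forces $c=0$ (and delivers an $L^\infty$ boundary density rather than a general boundary measure). Once this is in hand the rest is routine: identifying the boundary function via Proposition~\ref{thm:dlVP}, casting the Poisson kernel into the normalized shape demanded by Lemma~\ref{lem:g}, and the non-vanishing check that legitimizes the inversion in part~2.
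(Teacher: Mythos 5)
Your proof is correct and follows essentially the same route as the paper: the Poisson integral representation of the bounded analytic function $\Psi\circ F$, identification of its boundary values as $\Psi\circ F_0$ via Proposition~\ref{thm:dlVP}, Lemma~\ref{lem:g} with $g(u)=\pi^{-1}(u^2+1)^{-1}$, and then specialization to $\Psi(w)=(w+i)^{-1}$ for part~2. The two details you flag for care --- ruling out the linear term $c\eta$ in the Poisson representation and checking that $\int\nu_{F_0}(dw)/(w+i)$ has strictly negative imaginary part so the reciprocal is legitimate --- are points the paper leaves implicit, and your treatment of them is right.
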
 
 \begin{proof}
 1.~Since $ z \mapsto \Psi(F(z)) $ is bounded and analytic over $\mathbb{C}^+$, its values where $\Im z>0$  admit the Poisson integral representation (cf.~\cite[Thm.~11.2]{Duren}):     
 \be  \label{eq:useof_g}
 \Psi(F(x+i\eta )) \=   \int \Psi(F(u+i 0)) \, \frac{\pi^{-1} \, \eta \, \,  du}{(u-x)^2+\eta^2}   \,,   \ 
 \ee  
 By Lemma~\ref{lem:g}, with $g(u) = \pi^{-1}/(u^2+1)$, in the limit $\eta \to \infty$ the expression on the right converges to   $\nu_{F_0}(\Psi)$.  \\ 
 
\noindent
2.~ The second statement, \eqref{eq:Gamma2}, follows by applying  \eqref{eq:name} to the function $\Psi(w) := -[w+i]^{-1}$.\\ 
\end{proof}

\begin{proof}[{\bf Proof of Theorem~\ref{thm:Cauchy} }] 
By \eqref{eq:name}, applied to the function $\Psi(w) = e^{it w }$ with $t \in (0,\infty)$, we learn that:  
\be \label{eq:tmoment}
\int e^{i t w}  \nu_{F_0}(dw) \ =\  
\lim_{\eta\to \infty}   e^{i t F(x+i\eta)}   \  = \ e^{it\Gamma}
\ee 
where the limit is evaluated using  \eqref{eq:Gamma2}.  \\

The above argument yields the generating function of the probability measure $\nu_{F_0}$ for $t>0$ (that part being applicable regardless of the first assumption of the theorem).   However,  
under the assumption that $F_0(x)$ is a.s. real for $x\in \R$ the generating function at $t<0$ can also be obtained from \eqref{eq:tmoment} through complex conjugation.    Thus, under this assumption, for any $t\in \R$: 
\be \label{char}
 \int e^{i t z}  \nu_{F_0}(dz)  \ = e^{it \Re \Gamma - |t| \Im \Gamma} \,. 
\ee  
Since probability measures on $\R$  are uniquely determined by their characteristic functions, \eqref{char}  implies  that the probability distribution $\nu_{F_0}$ coincides with that of  $ \Re \Gamma +  \xi\,  \Im \Gamma $
 where  $\xi$ is the standard Cauchy random variable of the probability distribution $\pi^{-1} d \xi/[\xi^2+1]$. 
\end{proof} 
 
\medskip  

\subsection{Examples and the relation of the Cauchy law with Boole's identity}
 
Following are some examples of functions to which  Theorem~\ref{thm:Cauchy} applies.  One may note that these functions  differ quite significantly in the structure of the higher correlations, which however  do not affect the common Cauchy law.

\begin{enumerate} 
\item The periodic function (cf.~\cite[Ch.~19]{Herglotz_trick})
\be  F^{Per} (z) \= -\pi \cot(\pi z) \= \frac{-1}{z} \-
    \sum_{n=1}^\infty \left( \frac{1}{z-n} +  \frac{1}{z+n} \right) \, \\ 
    \ee 
\item  Quasi-periodic  functions of the form
\be 
F^{QP}(z) \=  - \sum_{j=1}^M  \alpha_j \cot (\beta_j z+ \theta_j) 
\ee 
with $\alpha_j \ge 0$ and  $\beta_j, \theta_j \in \R$.  
\item  The random function with Poisson distributed poles: 
\be \label{eq:Poisson}
F_\omega^{Poi}(z) \= \lim_{N\to \infty} \sum_{u\in \omega \cap [-N,N]} \frac{1}{u-z}
\ee 
where $\omega \subset \R$ is a random configuration of the Poisson point process on $\R$ with intensity $dx$.   In this case, the assumptions of Theorem~\ref{thm:Cauchy} hold for almost every $\omega$ (cf. Subsection~\ref{eq:St_examples}).  
\item A function whose singularities have the $\beta$-ensemble statistics, e.g. 
\be  \label{eq:GUE}
F_\omega^{GUE}(z)  \= \lim_{N\to \infty} \sum_{u\in \omega \cap [-N,N]} \frac{1}{u-z} 
\ee 
where $\omega$ is a configuration of the shift invariant determinantal point process associated with the kernel  $K(x,y) = \frac{\sin\pi (x-y)}{\pi(x-y)}$  (cf. Subsection~\ref{eq:St_examples}).
\item  More generally than the previous two examples, $F(z)$ could be a random HP  function of shift invariant distribution, as defined in Subsection~\ref{subset:stationary} below.   The almost-sure shift amenability of such functions   is the consequence of Birkhoff's ergodic theorem.   
\end{enumerate} 

The universality of the first order statistics, which holds regardless of the differences in the second order statistics expresses the fact that  the fraction of the Lebesgue measure ($\mathcal L$): 
\be 
\frac{ \mathcal L (\{ x\in [-L/2,L/2] \, : \  F(x) >t \} )}{L} 
\ee  
is not affected by a wide range of rearrangements of the singularities.   These may include both shifts of the singularities locations and splits of their mass.    A similar ``integrability''   condition can be spotted to lie behind   an identity  which G. Boole presented  to the Royal Society in 1857.  In a slightly generalized form, the Boole identity may be stated as follows.

\begin{proposition} [Extension of Boole~\cite{Boole}]  \label{prop:boole} 
For any finite singular measure $\mu(dx)$ which has no absolutely continuous component the function 
\be 
F(z) \= \int_\R \frac{\mu(du) } {u-z} 
\ee
satisfies, for all for any $t>0$:
\be
\mathcal L (\{ x\in \R \, : \  F(x+i0) >t \} )  \=   \frac{\mu(\R)}{t} 
\ee
\end{proposition}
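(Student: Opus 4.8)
The plan is to reduce the identity to a single asymptotic comparison for the auxiliary Herglotz--Pick function obtained by inverting $t-F$, exploiting the fact that the boundary argument of that function is exactly $\pi$ times the indicator of the set whose length we wish to compute. Throughout one uses that $F(z)=\int_\R(u-z)^{-1}\mu(du)$ is itself in the HP class, since $\Im F(z)=\int \Im z\,|u-z|^{-2}\,\mu(du)>0$ for $\Im z>0$; matching with \eqref{eq:F} shows $\mu$ is precisely the spectral measure of $F$. By Proposition~\ref{thm:dlVP} the boundary values $F_0(x)=F(x+i0)$ exist for a.e.\ $x$, and since the absolutely continuous part of the spectral measure of an HP function is $\pi^{-1}\Im F_0(x)\,dx$, the hypothesis that $\mu$ has no a.c.\ component means $\Im F_0(x)=0$, i.e.\ $F_0(x)\in\R$, for a.e.\ $x$. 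The degenerate case $\mu\equiv 0$ (hence $F\equiv 0$) is trivial, so assume $\mu(\R)>0$.

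First I would introduce $G_t(z):=(t-F(z))^{-1}$ for $z\in\C^+$. Since $\Im(t-F(z))=-\Im F(z)<0$ we have $t-F(z)\in\C^-$, hence $G_t(z)\in\C^+$; consequently a single-valued branch $\Lambda_t(z):=\log G_t(z)$ exists on the simply connected set $G_t(\C^+)\subset\C^+$, and $\Lambda_t$ is again HP, with the additional feature that $\Im\Lambda_t=\arg G_t\in(0,\pi)$ is bounded. Feeding this into the Herglotz representation \eqref{eq:F} of $\Lambda_t$, boundedness of $\Im\Lambda_t$ forces the linear coefficient $a$ to vanish and the spectral measure $\sigma_t$ to be purely absolutely continuous, with density $\xi_t(u)=\pi^{-1}\Im\Lambda_t(u+i0)=\pi^{-1}\arg G_t(u+i0)$. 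For a.e.\ $x$ one has $F_0(x)\in\R$ and $F_0(x)\neq t$ — the latter because $F\not\equiv t$ (as $F(iy)\to 0$) and a nonconstant function in the Nevanlinna class cannot vanish on a set of positive Lebesgue measure (Privalov), applied to $F-t$. At such $x$, $G_t(x+i0)=(t-F_0(x))^{-1}$ is a nonzero real number, positive when $F_0(x)<t$ and negative when $F_0(x)>t$; therefore $\xi_t(u)=\id[F_0(u)>t]$ for a.e.\ $u$.

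It remains to evaluate $y\,\Im\Lambda_t(iy)$ in two ways as $y\to\infty$. On one side, the representation gives $\Im\Lambda_t(iy)=\int_\R \tfrac{y}{u^2+y^2}\,\id[F_0(u)>t]\,du$, so $y\,\Im\Lambda_t(iy)=\int_\R \tfrac{y^2}{u^2+y^2}\,\id[F_0(u)>t]\,du$, which increases monotonically to $\mathcal{L}(\{x:F_0(x)>t\})$ by the monotone convergence theorem (the integrand rises pointwise to $\id[F_0(u)>t]$). On the other side, $\Im\Lambda_t(iy)=\arg G_t(iy)=-\arg(t-F(iy))$, and from $F(iy)=\int\tfrac{u\,\mu(du)}{u^2+y^2}+i\int\tfrac{y\,\mu(du)}{u^2+y^2}$ one gets $|\Re F(iy)|\le \mu(\R)/(2y)=O(1/y)$ (AM--GM) and $y\,\Im F(iy)=\int\tfrac{y^2}{u^2+y^2}\mu(du)\to\mu(\R)$, so $\Im F(iy)=\mu(\R)/y+o(1/y)$. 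Hence $t-F(iy)\to t>0$ and, using $\arg w=\arctan(\Im w/\Re w)$ near the positive axis, $-\arg(t-F(iy))=\tfrac{\Im F(iy)}{t-\Re F(iy)}+O(y^{-3})=\tfrac{\mu(\R)}{t\,y}+o(1/y)$, so that $y\,\Im\Lambda_t(iy)\to\mu(\R)/t$. Equating the two limits yields $\mathcal{L}(\{x:F(x+i0)>t\})=\mu(\R)/t$, as claimed (in particular the level set automatically has finite measure).

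I expect the main obstacle to be the step identifying $\xi_t=\id[F_0>t]$ a.e.: it needs the clean passage from ``$\Im\Lambda_t$ bounded'' to ``$\sigma_t$ absolutely continuous with density $\pi^{-1}\Im\Lambda_t(\cdot+i0)$'' via \eqref{eq:F}, and, more delicately, the boundary-uniqueness fact (Privalov) that rules out $F_0\equiv t$ on a positive-measure set; the remaining two asymptotics and the monotone-convergence interchange are routine. As a sanity check one may first verify the atomic case $\mu=\sum_{j=1}^n a_j\delta_{\lambda_j}$ directly: on each gap between consecutive $\lambda_j$ the function $F_0$ increases from $-\infty$ to $+\infty$, so $\{F_0>t\}$ is a union of intervals whose right endpoints are the $\lambda_j$ and whose left endpoints are the roots of the degree-$n$ polynomial equation $F_0(x)=t$; Vieta's formula for the sum of those roots produces total length $\sum_j a_j/t$, consistent with the general statement and with the ``integrability'' heuristic mentioned before Proposition~\ref{prop:boole}.
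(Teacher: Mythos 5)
Your proof is correct, but it follows a genuinely different route from the one in Appendix~\ref{app:Boole}. The paper stays with $F$ itself: it computes the characteristic function $\int e^{i\tau F(x+i0)}\,\frac{\eta}{\pi(x^2+\eta^2)}\,dx = e^{i\tau \Re F(i\eta)-|\tau|\Im F(i\eta)}$ by contour integration (the same computation that drives Theorem~\ref{thm:Cauchy}), concludes that $F(x+i0)$ is exactly Cauchy-distributed under the harmonic measure at $i\eta$ with parameters $(\Re F(i\eta),\Im F(i\eta))$, and then reads off the tail probability of the level set as $\eta\to\infty$ using $\Re F(i\eta)\to 0$ and $\eta\,\Im F(i\eta)\to\mu(\R)$. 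You instead pass to the logarithm $\Lambda_t=\log\,(t-F)^{-1}$, use boundedness of $\Im\Lambda_t$ to force its spectral measure to be absolutely continuous with density $\pi^{-1}\arg G_t(\cdot+i0)=\id[F_0>t]$, and equate two evaluations of $y\,\Im\Lambda_t(iy)$ --- the classical argument-function proof of Boole's identity. Both proofs converge on the same endgame (the monotone-convergence identity for $\mathcal L(\{F_0>t\})$ and the asymptotics of $F(iy)$). What each buys: the paper's version recycles the contour-integration lemma already needed for the main theorem and yields the full distribution of $F_0$ under harmonic measure, not just one level set; yours is self-contained within the Herglotz representation, avoids contour integration, and makes the ``integrability'' heuristic transparent, at the price of invoking two standard but nontrivial inputs --- that a Herglotz function with bounded imaginary part has purely a.c.\ spectral measure with density $\pi^{-1}\Im\Lambda(\cdot+i0)$, and the Luzin--Privalov uniqueness theorem to rule out $F_0=t$ on a set of positive measure (the paper's proof quietly needs the same null-set fact to pass between $\ge t$ and $>t$). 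Your reduction of the atomic case to Vieta's formulas is exactly Boole's original argument and is a sound consistency check.
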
  
Boole's original Theorem was stated and proven for point measures of finite support.  For convenience, a  proof of this generalization is enclosed in Appendix~\ref{app:Boole}. 

\section{The spectral representation and related topology} 
\subsection{An alternative spectral representation}

As an alternative to \eqref{eq:F},  each HP function can also be written as   
\be \label{eq:circrep}
F(z) = G(\, w(z)\, ) \qquad \mbox{with} \quad z \ =\  i \frac{1+w}{1-w}  ,\qquad   w\ = \  \   \frac{z-i}{z+i}\, , 
\ee 
with 
\be \label{eq:G} 
G(w) \ = \   b\ + \ \int_{S} \sigma( d \theta) \,  \,  i\, \,  \frac{e^{i\theta} +w}{e^{i\theta} -w}  \,  
\ee 
where $ \sigma$ is a uniquely defined finite measure on the unit circle $S$ and $ w $ is a point in the unit disk $ \mathbb{D} $.   
The   correspondence between the two representations is:  
\be\label{eq:meascorresp}
\frac{\mu(dx)}{x^2+1} \ = \ \sigma( d\theta)  \,  \id[\theta \neq 0]\, , \qquad a \ = \  \sigma(\{0 \}) \, . 
\ee 
with $x= - \cot(\theta/2)$, where the coefficient $a$ of \eqref{eq:F}  is incorporated  as a $\delta$-point mass of $\sigma$.  \\

Thus, any  HP function is uniquely associated with the pair $\left (\sigma, G(0) \right)$ in the space
\be \label{eq:Omega} 
\Omega \ =\  \left\{ \sigma \in \mathcal{M}(S) \, :  
\int  \sigma(d\theta)    < \infty \, \right\}\,  \times\,   \overline {\mathbb{C}^+ }\, \
\ee 
or equivalently  with the pair $(\mu, F(i) )  $ in the space
\be 
 \widetilde  \Omega \  =\   \left\{  \mu \in \mathcal{M}(\R)   \, \:  
\int \frac{\mu(du) }{u^2+1}  < \infty \, \right\}   \,  \times \,  \overline {\mathbb{C}^+} \, ,  
\ee 
and correspondingly the space of HP functions can be identified with either  $ \Omega$ or $\widetilde  \Omega $, and we shall be frequently switching between the two.   \\


\subsection{The topology of pointwise convergence }  \label{sec:pointwise_conv}

A natural topology on the collection of  HP functions is that of uniform convergence on  compact subsets of $ \C^+ $ (uniform convergence preserves  analyticity as well as the restriction $\Im F(z) \ge 0$). 
However, it is a  known consequence of the Montel theorem that  under an added restriction on the range of the functions the conditions can be simplified.  In particular, for HP functions uniformity on compacta follows from just pointwise convergence over $\C^+$ (as can also be seen from the   bounds presented below). 
In this section our goal is to clarify the expression of this topology  in terms of the spectral representation.  Particularly  convenient for this purposes is the  
representation of HP functions in  the space $\Omega$, in terms of  \eqref{eq:G}.  \\  

The parameter $b$, as well as the total  mass of the measure $\sigma(S) $ are continuous in the topology of pointwise convergence, since  
\be \label{eq:normalization}
b \= \Re G(0)\, , \qquad  \sigma(S) = \Im G(0) \, , 
\ee  
where $G(0) \equiv F(i)  $.  \\

For a pair of  measures on $S$  the variational distance is  
\be 
|m_1 -m_2| = \sup  \left\{ \int_{S} \   f(e^{i\theta})\  
 [ m_1(d\theta)-m_2(d\theta) ] \ \big| \  f \in C(S) ; \ \|f\|_{ \infty} \le 1  \right \} \, , 
 \ee 
  and in case of measures  of equal mass  the  Wasserstein distance  is
\be 
W(m_1,m_2) \= \sup \left\{ \int_{S} \   f(e^{i\theta})\  
 [ m_1(d\theta)-m_2(d\theta) ] \ \big| \   \rm{Lip}(f) \le 1 \right \} \, .   
\ee      
To make this applicable to pairs of HP functions $G_j$ with spectral measures $\sigma_j(d\theta)$ of different masses,  
let us  first consider the case   $ \sigma_j(S) \neq 0 $ and denote  the corresponding probability measures  by
\be 
\widetilde \sigma_j (d \theta) \=  \frac{\sigma_j(d\theta)}{\sigma_j(S)} \ .   \\ 
\ee 
By direct estimates,    
for all $w\in \mathbb D$ and all $ \theta\in S$,
\be  \label{eq:WZbounds} 
 \left|  \left(\frac{e^{i\theta}+w}{e^{i\theta}-w} \right) \right| \ \le \  \frac{2}{ 1-|w|} \, ,   \qquad 
  \left| \frac{d}{d\theta} \left(\frac{e^{i\theta}+w}{e^{i\theta}-w} \right) \right| \ \le \  \frac{2\, |w|}{ (1-|w|)^2} \, ,    
\ee
one may  hence conclude: 
\begin{multline}   \label{eq:Gbound}  
|G_1(w) \ - \ G_2(w)|  
\  \le \  |\Re  G_1(0) - \Re  G_2(0) | \  + \  \left| \Im \left(  G_1(0) -  G_2(0) \right) \right|  
\, \frac{2 }{ 1-|w|}   \\[1.5ex]   
\+ \frac{\sigma_1(S) + \sigma_2(S)}{2} \,  
W\left(\widetilde \sigma_1, \widetilde \sigma_2\right) \, 
\frac{2\, |w|}{ (1-|w|)^2} \, .     
\end{multline}  
If  one (or both) of the measures is of  zero mass the last term can be dropped, since then its  ``normalized'' measure can be selected arbitrarily, and ``by fiat'' it can be arranged so that $ \widetilde \sigma_j $ are equal and thus  $W\left(\widetilde \sigma_1, \widetilde \sigma_2\right)  =0$.\\ 

 These bounds are of help  in establishing the following  equivalence.

\begin{theorem} \label{prop:topology}
For a sequence of HP functions $G_n: \mathbb D \mapsto\overline{\mathbb{C}^+} $  the following are equivalent:
\begin{enumerate}
\item[A.] The pair of conditions:    
   \begin{enumerate}
   \item[1.]  the single-site  limit exists: $ \lim_{n\to \infty} G_n(0)  =: G(0) $. 
   \item[2.]   the  spectral measures  $\sigma_n$ on $S$ (defined by \eqref{eq:G}) converge weakly to a measure $\sigma \in \mathcal M(S)$, in the sense that 
   $  \sigma_{n}(g) \to \sigma(g)  $ for every continuous $g\in C(S)$.
   \end{enumerate}
\item[B.] There exists a HP function $G$ such that for all $w \in \mathbb D$:  $G_n(w) \to  G(w) $,  uniformly on compact subsets of $ \mathbb D $.  
\item[C.] The functions $G_n$ converge pointwise over $\mathbb D$. 
\end{enumerate}
 \end{theorem}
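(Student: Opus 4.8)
The plan is to prove the equivalence by going around the cycle $A\Rightarrow B\Rightarrow C\Rightarrow A$. The implication $B\Rightarrow C$ is immediate, since uniform convergence on compact subsets of $\mathbb D$ trivially forces pointwise convergence; so the substance lies in $A\Rightarrow B$ and $C\Rightarrow A$, and I would take them in that order.

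For $A\Rightarrow B$ I would let $G$ be the HP function attached through \eqref{eq:G} to the limiting pair $(\sigma,G(0))$ of condition A, first observing that this pair does belong to $\Omega$: $\sigma$ is a finite measure, and $\Im G(0)=\lim_n\Im G_n(0)=\lim_n\sigma_n(S)=\sigma(S)$ by testing the weak convergence of $\sigma_n$ against the constant $1\in C(S)$, together with \eqref{eq:normalization}. To get $G_n\to G$ uniformly on each disk $\{|w|\le r\}$, $r<1$, I would substitute $G_1=G_n$, $G_2=G$ into the a priori estimate \eqref{eq:Gbound}. Its first two terms tend to $0$ by the first part of condition A. For the last term, when $\sigma(S)>0$ one has $\sigma_n(S)>0$ for all large $n$; the masses stay bounded, the normalized measures $\widetilde\sigma_n$ converge weakly to $\widetilde\sigma$, and hence $W(\widetilde\sigma_n,\widetilde\sigma)\to 0$ because on the compact metric space $S$ the Wasserstein distance metrizes weak convergence of probability measures; since $|w|(1-|w|)^{-2}$ is bounded on $\{|w|\le r\}$, this term tends to $0$ uniformly there. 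When $\sigma(S)=0$ the last term of \eqref{eq:Gbound} may simply be dropped, as noted after that display, leaving again two vanishing terms. This yields B.

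For $C\Rightarrow A$ I would start from the pointwise limit $G_\infty(w):=\lim_n G_n(w)$. Since $\overline{\mathbb C^+}$ is closed, $G_\infty(0)\in\overline{\mathbb C^+}$, so the single-site limit of the first part of A exists. Next, the masses $\sigma_n(S)=\Im G_n(0)$ converge and are therefore bounded, so by weak-$*$ sequential compactness of norm-bounded subsets of $\mathcal M(S)$ (here $S$ is compact, so $C(S)$ is separable) every subsequence of $(\sigma_n)$ has a weakly convergent sub-subsequence. The point is to check that the subsequential weak limit is unique. If $\sigma_{n_k}\to\sigma'$ and $\sigma_{m_j}\to\sigma''$, then each of the subsequences $(G_{n_k})$, $(G_{m_j})$ satisfies condition A (with single-site limit $G_\infty(0)$ in both cases), so by the implication $A\Rightarrow B$ already proved they converge locally uniformly — hence pointwise — to the HP functions $G'$, $G''$ associated with $(\sigma',G_\infty(0))$, $(\sigma'',G_\infty(0))$ respectively; comparison with the hypothesis C forces $G'=G_\infty=G''$, and the bijectivity of the correspondence $G\leftrightarrow(\sigma,G(0))$ then gives $\sigma'=\sigma''$. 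Thus $\sigma_n$ converges weakly to a single $\sigma\in\mathcal M(S)$, which is the second part of condition A; as a byproduct $G_\infty$ is seen to be HP.

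I expect the genuine obstacle to be this last uniqueness step in $C\Rightarrow A$: pointwise convergence on $\mathbb D$ on its own delivers only compactness of $(\sigma_n)$ with no handle on the limit, and to pin the limit down one has to combine the quantitative bound \eqref{eq:Gbound} — which is what upgrades pointwise to locally uniform convergence along subsequences, playing here the role of a Vitali/Montel argument — with the injectivity of the spectral correspondence. A routine but not-to-be-omitted check appearing at several places is the admissibility of the limiting datum, i.e. $\Im G(0)=\sigma(S)$, which however is an immediate consequence of weak convergence.
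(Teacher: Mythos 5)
Your proposal is correct, and for $A\Rightarrow B$ and $B\Rightarrow C$ it coincides with the paper's argument: the same definition of the limit function from the pair $(\Re G(0),\sigma)$, the same case split on $\sigma(S)=0$ versus $\sigma(S)\neq 0$, and the same use of the a priori bound \eqref{eq:Gbound} together with the fact that the Wasserstein distance metrizes weak convergence of probability measures on the compact space $S$. The one place where you genuinely diverge is the uniqueness of the subsequential weak limits in $C\Rightarrow A$. The paper pins the limit down directly: every accumulation point $\sigma$ of $(\sigma_n)$ must satisfy \eqref{eq:diffrep}, i.e.\ have prescribed integrals against the kernels $\theta\mapsto (e^{i\theta}-w)^{-1}$, $w\in\mathbb{D}$, and a Stone--Weierstrass/resolvent-identity argument shows these integrals determine the measure. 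You instead bootstrap: apply the already-proved implication $A\Rightarrow B$ along each weakly convergent subsequence, identify the resulting locally uniform limit with the pointwise limit $G_\infty$ supplied by hypothesis $C$, and then invoke the injectivity of the correspondence $G\leftrightarrow(\sigma,G(0))$ asserted in the Herglotz representation \eqref{eq:G}. Both are valid; your route is economical in that it reuses $A\Rightarrow B$ and avoids a separate density argument, but it leans on the uniqueness clause of the representation theorem as a black box, which is essentially what the paper's Stone--Weierstrass step proves. The admissibility check $\Im G(0)=\sigma(S)$ that you flag is indeed needed and handled correctly in both arguments.
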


 \begin{proof}  
\noindent $``A \Rightarrow\, B'': $    Set $b=\Re G(0)$, and let $G: \mathbb D \mapsto \overline{\C^+}$ be the function which corresponds to $(\Re G(0), \sigma)  $ under \eqref{eq:G}.    (The definition is consistent with the previously determined $G(0)$, since under the assumption [A1-A2],  the  condition $ \Im G_n(0) \= \sigma_n(S)$  persists also in the limit $n\to \infty$.)  \\ 

 The claim that $[G_n(w) - G(w) ]\to 0$ uniformly on compact subsets of $\mathbb D$ will be verified separately for two cases:\\ 
i.   $\sigma(S) \=0$:  \  The claim follows from \eqref{eq:Gbound} (without the last term) and $\Im G_n(0) = \sigma_n(S)\to \sigma(S) = 0$.   \\ 
 ii.   $\sigma(S) \neq 0$: \   In this case the weak convergence of the measures implies  that also the normalized measures converge weakly, and by implication  also  in the Wasserstein distance.  Thus 
 \be 
 \lim_{n\to \infty} W\left(\widetilde \sigma_n, \widetilde \sigma\right) \= 0 \,. 
 \ee 
 The claim then follows from \eqref{eq:Gbound}. \\

\noindent $``B \Rightarrow\, C'': $  is evident.\\ 
 
\noindent  $``C\Rightarrow\, A'': $  The convergence of $G_n(0) $ directly implies  [A1].  By \eqref{eq:normalization} this  implies convergence of $b_n$ as well as that of the total mass $\sigma_n(S)$.  
  
   By the compactness of the set of measures on $S$ with $\sigma(S) \le \Im G(0)$, the sequence $\sigma_n( d \theta)$ has accumulation points, to which it converges over suitable subsequences $(n_k)$.   All these measures share the values of the following integrals: 
   \be \label{eq:diffrep}
 \int_{S}  \frac{\sigma( d \theta)}{e^{i\theta} -w}  \= \lim_{k\to \infty} 
\int_{S}  \frac{\sigma_{n_k}( d \theta)}{e^{i\theta} -w}  \= \frac{1}{ 2i w} \ \lim_{k\to \infty}  ( G_{n_k}(w) - G_{n_k}(0) ) \, ,    
 \ee
for all $w \in \mathbb{D}\backslash \{0\} $ (in addition to  $w=0$ which was established already).  
A standard argument~\cite{CFKS}, based on the  Stone-Weierstrass theorem (and resolvent identities), allows to conclude that: i. along each such subsequence the measures converge weakly, ii. the limit is uniquely characterized by \eqref{eq:diffrep}, and thus $\sigma_n$ is a convergent sequence. \\ 
 \end{proof}

\noindent {\bf Remarks on Theorem~\ref{prop:topology} :}
\begin{enumerate}  
\item  The equivalence $ B \Leftrightarrow C $ is a known consequence of the more general Montel  theorem.  

\item For  [A1] the  point $0$ is convenient, but with a minor adjustment in the argument it can be replaced by any other (pre selected) $w_0 \in \mathbb D$.   

\item   The statement can  be alternatively expressed in terms of the functions  $F_n\equiv G_n \circ w$ which are defined over $\C^+$, and the corresponding spectral measures $\mu_{n}$  on $\R$.   The main difference is   that   [A2] is to replaced by the condition:  \\[2ex] 
{\it { [A2']}   The measures $\mu_{n}(f)$ converge vaguely, in the sense  that  
$  \mu_{n}(f) \to \mu(f)  $ for all continuous, compactly supported functions  $ f \in C_c(\R)$.  }   \\[1.5ex] 
In terms of $\sigma(d\theta)$, the condition [A2']  corresponds to vague convergence on $S\backslash \{0\}$, which is a weaker statement that [A2] (since that guarantee the preservation of the total mass $\sigma(S)$).  
The two are however equivalent under the assumption [A1],  since [A2] may be concluded from [A2'] plus the convergence $ \sigma_n(g) \to \sigma(g) $ of a single function $g\in C(S)$ with $g(1) \neq 0$.  \\ 
  \end{enumerate}

In view of the rather direct correspondence between the representations of HP functions  as  $ G_{\omega,n} : \mathbb D \mapsto \overline{\C^+}$ versus $ F_{\omega,n} = G_{\omega,n}  \circ w : \C^+ \mapsto \overline{\C^+}$,   from here on we shall not be duplicating  the various statements of interest and instead use the language which locally appears to be convenient.  \\

  It may be worth noting that in contrast to $b$, the parameter 
\be
a \ = \Im F(i) - \int_\R \frac{\mu_F(dx)}{x^2+1} \= \sigma(\{0\})
\ee 
 is not  a continuous function on $\Omega$.  That is  clearly seen in the circle representation, where it corresponds to the fact that weak convergence of measures on $\C$ allows for the build  up of a $\delta$-function at $\{1\}$.  
 
\section{Stieltjes transforms of random measures} \label{sec:Stieljes}

\subsection{A constructive criterion} 

Spectral measures of interest often come in the form of random Borel measures, $ \mu_\omega $ on $\R$ (a concept discussed e.g. in ~\cite{Kal}), with the indexing parameter $\omega$ ranging over  a probability space $ (\Omega, \mathcal{A}, \mathbb{P} ) $ over which we have the action of the  group of shifts of $\R$, represented by  measurable transformations   $\{ \mathcal{T}_a \}_{a\in \R}$ for which  $  \mu_{T_a \omega}$ coincides with the shifted measure $ T_a \mu_{\omega}$,  the action of shifts on measures being defined by: 
\be \label{shift}
 \T_a \mu (I) = \mu (I+a) \,. \\ 
 \ee     

The following deterministic result presents conditions under which  the Stieltjes transform  may be constructed for such measures as the pointwise  limit, $F_{\mu}(z) := \lim_{n\to \infty}  F_{\mu}^{(n)}(z)$, of 
\be\label{eq:Stieltjes}
 F_{\mu}^{(n)}(z) := \int_{-n}^n \frac{\mu(dx)}{x-z} \, .   
\ee
It is worth noting that under the conditions listed there the functional $\mu \mapsto  F_\mu$ is shift covariant, even though this property  may at first be questioned since the ``principal value''-like  integral seen in \eqref{eq:Stieltjes} is centered at $x=0$.\\ 

In the statement we compare the Stieltjes transform of $\mu$ with a reference measure $\overbar \mu$, which in applications to random measure may be the mean value of $\mu$ averaged over that source of randomness.   Let $N_\mu(x)$ be the counting function, and    $\delta N(x)$ the difference (which in the above example corresponds to  the fluctuating part) defined by: 
\be  \label{defN}
N_\mu(x) := \int_0^x \mu(dy) \, , \qquad 
N_{\overbar \mu}(x) := \int_0^x  \overbar \mu (dy) \, , \qquad 
\delta{N}(x) \ :=\  N_\mu(x)  - N_{\overbar \mu}(x)    \, . 
\ee
Through integration by parts:
\be\label{eq:repStieltjes}
F_{\mu}^{(n)}(z) = F_{\overbar \mu}^{(n)}(z)   \ +\  \left[ \frac{\delta  N(n)}{n-z} \ - \    \frac{\delta  N(-n)}{-n-z} \right] \  +\   \int_{-n}^n  \frac{\delta  N(x)}{(x-z)^2} \, dx \,. 
\ee
Using this representation  one has the 
the following  criterion for the existence of the Stieltjes transform.

\begin{theorem}\label{thm:Stieltjes}
Let $ \mu $ and $\overbar \mu$ be a pair  of  Borel measures on $\R$ with the properties:
\begin{enumerate}[i.]   
 \item for the reference measure
  the following limit exists for all (or equivalently, by Theorem~\ref{prop:topology}, for some) $z\in \C^+$ : 
 \be \label{eq:assconvmubar}
  \lim_{n\to \infty}  \int_{-n}^n \frac{\overbar \mu(dx)}{x-z} \ =: \  F_{\overbar \mu} (z) \, ,  
 \ee
\item the difference in the pair's counting functions, defined by  \eqref{defN}, satisfies:   
\be\label{eq:assStieljes}
\lim_{n\to \pm \infty} \frac{\delta  N(n)}{n} = 0 \, \qquad \mbox{and} \qquad  \int  \frac{|\delta  N(x)|}{x^2+1} \, dx  < \infty \, .    
\ee
\end{enumerate}
Then:
\begin{enumerate}
\item   the limit (to which we refer as the Stieltjes transform)
\be\label{eq:Stieltjes2}
F_{\mu}(z) := \lim_{n\to \infty} F_{\mu}^{(n)}(z)
\ee
exists for all   $ z \in \C^+ $. 
\item for each $t \in \R$:
\be\label{eq:limitetainfty}    \lim_{\eta \to \infty} \left[ F_\mu(t+i\eta) \ - \  F_{\overbar \mu}(t+i\eta) \right ] \ = \ 0 \,. 
\ee
\end{enumerate} 
 If in addition
  \be \label{eq:massmuoverbar}
\lim_{|n|\to \infty}  \frac{\overbar \mu([n,n+1])}{n} \ = \ 0
\ee
then 
\begin{enumerate}
\item[3.]
 the resulting Stieltjes transform   is a shift-covariant functional of $\mu$, in the sense that  
the limit in \eqref{eq:Stieltjes2} exists also for $\mu$  replaced by any of the  shifted measures defined by 
\eqref{shift} and  for all $ a \in \R $ and $ z \in \C^+ $ 
\be 
 F_{\mathcal{T}_a \mu}( z ) = F_{\mu}(z+a) \, . 
 \ee  
 \end{enumerate}
 \end{theorem}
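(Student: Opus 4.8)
\emph{Proof strategy.} The plan is to let the integration-by-parts identity \eqref{eq:repStieltjes} do essentially all of the work: pass to the limit $n\to\infty$ in it term by term, and then read off parts 2 and 3 from the resulting closed-form expression. (Throughout I use the ``for all $z\in\C^+$'' form of \eqref{eq:assconvmubar}, which by Theorem~\ref{prop:topology} is equivalent to the stated ``for some'' form, the $F^{(n)}_{\overbar\mu}$ being HP functions whose spectral measures $\overbar\mu|_{[-n,n]}$ converge vaguely to $\overbar\mu$.) Fix $z\in\C^+$. In \eqref{eq:repStieltjes} the first term tends to $F_{\overbar\mu}(z)$ by hypothesis (i). The bracketed boundary terms vanish because $\bigl|\delta N(\pm n)/(\pm n-z)\bigr| = \bigl(|\delta N(\pm n)|/|n|\bigr)\cdot\bigl(|n|/|\pm n-z|\bigr)$, whose first factor tends to $0$ by the first half of \eqref{eq:assStieljes} while the second stays bounded. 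The integral $\int_{-n}^n \delta N(x)(x-z)^{-2}\,dx$ converges absolutely as $n\to\infty$: for a fixed $z\in\C^+$ the function $x\mapsto\bigl((x-\Re z)^2+(\Im z)^2\bigr)/(x^2+1)$ is continuous, strictly positive, and tends to $1$ at $\pm\infty$, hence is bounded below by some $c_z>0$, so $|(x-z)^{-2}|\le c_z^{-1}(x^2+1)^{-1}$, and $\int|\delta N(x)|(x^2+1)^{-1}\,dx<\infty$ by the second half of \eqref{eq:assStieljes}. This gives part 1 and, in the limit, the identity
\[
F_\mu(z)-F_{\overbar\mu}(z)\;=\;\int_\R \frac{\delta N(x)}{(x-z)^2}\,dx,\qquad z\in\C^+ ,
\]
which I will use for parts 2 and 3.

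For part 2 I would estimate the right-hand side at $z=t+i\eta$ with $\eta\ge1$ by splitting the integral at $|x|=R$. On $\{|x|>R\}$ the comparison above is uniform in $\eta\ge1$ (using $(x-t)^2+\eta^2\ge(x-t)^2+1$), giving $|(x-t-i\eta)^{-2}|\le C_t(x^2+1)^{-1}$, so this piece is at most $C_t\int_{|x|>R}|\delta N(x)|(x^2+1)^{-1}\,dx$, which is small once $R$ is large. On $\{|x|\le R\}$ one has $|(x-t-i\eta)^{-2}|\le\eta^{-2}$ and $\int_{|x|\le R}|\delta N(x)|\,dx\le(R^2+1)\int_{|x|\le R}|\delta N(x)|(x^2+1)^{-1}\,dx<\infty$, so this piece is $O(\eta^{-2})$. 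Letting first $R\to\infty$ and then $\eta\to\infty$ yields \eqref{eq:limitetainfty}.

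For part 3 I would argue directly from the definition \eqref{eq:Stieltjes}. Write $\nu:=\mathcal{T}_a\mu$ and assume $a>0$ (the case $a<0$ is symmetric). The substitution $y=x+a$ gives
\[
F^{(n)}_\nu(z)\;=\;\int_{-n+a}^{\,n+a}\frac{\mu(dy)}{y-(z+a)}\;=\;F^{(n)}_\mu(z+a)\;+\;\int_{n}^{n+a}\frac{\mu(dy)}{y-(z+a)}\;-\;\int_{-n}^{-n+a}\frac{\mu(dy)}{y-(z+a)} .
\]
By part 1 the first term converges to $F_\mu(z+a)$, so it suffices to show the two boundary integrals vanish as $n\to\infty$. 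Each is bounded in modulus by $\mu([n,n+a])$ (resp.\ $\mu([-n,-n+a])$) divided by a quantity bounded below by $n/2$ for large $n$, and I would write $\mu([n,n+a])=\overbar\mu([n,n+a])+\delta N(n+a)-\delta N(n)$: dividing by $n$, the first summand tends to $0$ by \eqref{eq:massmuoverbar} (covering $[n,n+a]$ by at most $\lceil a\rceil+1$ unit intervals), and the remaining two by the first half of \eqref{eq:assStieljes}. Hence $F^{(n)}_\nu(z)\to F_\mu(z+a)$, i.e.\ $F_{\mathcal{T}_a\mu}(z)=F_\mu(z+a)$.

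The one place that needs genuine care is this last transfer of the growth condition: \eqref{eq:massmuoverbar} is imposed only on the reference measure $\overbar\mu$, so to control the ``principal-value'' endpoint corrections one must route it through the counting-function difference $\delta N$ — which is exactly where hypothesis \eqref{eq:assStieljes} re-enters. Everything else reduces to dominated-convergence and tail estimates powered by the single absolute-integrability bound $\int|\delta N(x)|(x^2+1)^{-1}\,dx<\infty$.
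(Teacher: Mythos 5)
Your proof is correct and follows essentially the same route as the paper: pass to the limit term by term in the integration-by-parts identity \eqref{eq:repStieltjes}, read off \eqref{eq:limitetainfty} from the resulting formula $F_\mu-F_{\overbar\mu}=\int \delta N(x)(x-z)^{-2}dx$, and control the endpoint corrections in the shift identity by decomposing $\mu$ near $\pm n$ into $\overbar\mu$ plus increments of $\delta N$. The only (harmless) variations are that you verify convergence at every $z$ directly rather than reducing to $z=i$ via Theorem~\ref{prop:topology}, and in part 3 you bound $\mu([n,n+a])$ via the counting-function identity instead of integrating by parts once more on $[n,n+a]$.
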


\begin{proof} 
1.~ Since the truncated measures $\id[-n,n]\, \mu(dx) $  converge to $\mu$ in the vague topology,  by Theorem~\ref{prop:topology}   the limit~\eqref{eq:Stieltjes2} exists or not simultaneously for  all $ z \in \C^+ $, and hence it suffices to test the convergence at  $ z = i $.  

Applying the representation~\eqref{eq:repStieltjes}  at  $ z = i $, the first term on the right converges by~\eqref{eq:assconvmubar}. The second and third terms converge almost surely to zero by the first assumption in~\eqref{eq:assStieljes}. The integral in the forth term is absolutely convergent, which ensures the convergence of this term. \\

2.~From the first part of this proof and~\eqref{eq:repStieltjes}  we learn that for any $ z \in \mathbb{C}^+ $:
\be
F_\mu(z) \ - \  F_{\overbar \mu}(z) \ = \ \int_{\mathbb{R}} \frac{\delta N(x) }{(x-z)^2} dx \, . 
\ee
Monotone convergence implies $ \lim_{\eta \to \infty } \int | \delta N(x) | /[(x-t)^2+\eta^2] dx = 0 $ for any $ t \in \mathbb{R} $ and hence the claim~\eqref{eq:limitetainfty}. \\

3.~In order to establish the shift-covariance we note that it is straightforward to show that for all $ n \in \N $, $ a \in \R $ and $ z \in \C^+ $:
\be
	 F_{\mathcal{T}_a \mu}^{(n)}(z) = F_{\mu}^{(n)}(z+a) + \int_n^{n+a} \frac{\mu(dx)}{x-a-z} - \int_{-n}^{-n+a} \frac{\mu(dx)}{x-a-z} \, . 
\ee
Each of the two terms on the right side converge to zero as $ n \to \infty $. This is seen through the representation
\be
\int_n^{n+a} \frac{\mu(dx)}{x-z} = \int_n^{n+a}\frac{\overbar \mu(dx)}{x-z} + \frac{\delta  N(n+a)}{n+a-z} -  \frac{\delta  N(n)}{n-z} +  \int_n^{n+a} \frac{\delta  N(x)}{(x-z)^2} dx 
\ee
(and analogously for the second term). The first term goes to zero as $ n \to \infty $ by~\eqref{eq:massmuoverbar}. 
The remaining three terms converge to zero using~\eqref{eq:assStieljes}. 
\end{proof}

\subsection{A pair of examples}   \label{eq:St_examples}

The criterion of Theorem~\ref{thm:Stieltjes} apply in particular to the following  examples of random spectral measures on $\R$, which are rather different nature.      
\begin{enumerate}
\item {\bf Poisson-Stieltjes function:}~In this example $ \mu_\omega $ is a Poisson process with constant intensity $1 $.   Picking for the reference   measure $\overbar \mu(dx) = dx $  (the Lebesque measure), one finds that  for every $ \varepsilon > 0 $:
\be 
|  \delta N_\omega(x) | \leq C_\omega(\varepsilon) \, \left( |x|^{\frac{1}{2} + \varepsilon} + 1 \right) \, . 
\ee
for all $x\in R$, with  $C_\omega(\varepsilon) $ which is almost surely finite.  
Consequently, the assumptions~\eqref{eq:assStieljes} are almost surely met in this case.   We refer to the function defined by the corresponding limit \eqref{eq:Stieltjes} as the Poisson-Stieltjes function.
\item {\bf The sine-kernel Stieltjes function:} The determinantal point process with the kernel $ K(x,y) = \sin(\pi  (x-y))/ [\pi (x-y)]$ defines an ergodic random measure $ \mu_\omega $ whose intensity is $1 $ (cf.~\cite{Soshnikov,Cupbook}).   To verify~\eqref{eq:assStieljes} for this case, with $\overbar \mu (dx) = dx$, one may use the observation that by an explicit computation (cf.~\cite[Ex.~4.2.40]{Cupbook}), for $ |x| \to \infty $:  
\be
\mathbb{E}\left[\delta N (x)^2 \right] = \int_0^x K(s,s) ds - \int_0^x  \int_0^x K(s,t) \, d s dt = \frac{\log\left( |x|\right)}{\pi^2} + \mathcal{O}(1) \, .
\ee
Consequently,  also in this case the integral in  \eqref{eq:assStieljes} is absolutely convergent.   Moreover, a Chebychev estimate  shows that 
$ \sum_n \mathbb{P}\left(  |\delta N(n)|/|n| > \varepsilon  \right) < \infty $ for any $ \varepsilon > 0 $, and hence, by the Borel-Cantelli  lemma, also the first condition in \eqref{eq:assStieljes} is met.
We refer to the function  defined through the limit \eqref{eq:Stieltjes}  with $\mu$ corresponding to this process  as the sine-kernel Stieltjes function $ F^{GUE}_\omega $.  
 \end{enumerate}

\noindent In both cases the  measure are  stationary and even ergodic.  
The  functions which are defined through the almost-sure limit~\eqref{eq:Poisson} provide  examples of  random stationary HP functions, a term to whose further exploration we turn next.   
It should be added that a construction related to \eqref{eq:assStieljes} was studied (for the Poisson process) in \cite{AlSeba}.     
However, the approach presented  there breaks the  shift covariance.

\medskip

\section{Random HP functions}  \label{sec:rand_HP}


Standard considerations imply that the 
function space $\Omega$ (and equivalently  $\widetilde \Omega$)   whose topology is discussed in Section~\ref{sec:pointwise_conv} is  metrizable 
and can be presented as a complete separable metric space (cf.~\cite{Kal}).    Estimates which are somewhat similar to \eqref{eq:Gbound} (though less explicit) are facilitated by  the ``flat metric'' (c.f. \cite{Dudley,Piccoli_Rossi}): 
\be 
d(\sigma_1,\sigma_2) \= \inf_{\widehat \sigma_1, \widehat \sigma_2 \in \mathcal M(S);  
|\widehat \sigma_1|=| \widehat \sigma_2|}  
\left[ \, |\sigma_1 - \widehat \sigma_1| \+  |\sigma_1 - \widehat \sigma_1|  \+ W( \widehat \sigma_1, \widehat \sigma_2) \,   \right] \,. \\ 
\ee

\begin{definition} \label{def:distconv}  (Random HP functions) \\ 
1.  Denoting by $\mathcal B$  the  Borel $\sigma$-algebra on $\Omega$ which corresponds to the  topology discussed above, a {\it random Herglotz-Pick function}  is given by a probability measure on  $(\Omega, \mathcal B)$.        

2.  A sequence of random HP function $ F_{\omega,n} : \C^+ \mapsto \overline{\C^+}$ is said to converge in distribution to $ F_{\omega} $ iff
 the probability measure on $\widetilde \Omega = \mathcal M(\R) \times \overline{\mathbb{C}^+}$ which forms the distribution of  $ ( \mu_{\omega,n} ,  F_{\omega,n}(i)   ) $  converges (weakly) to that of the $  ( \mu_{\omega} ,  F_{\omega}(i)   ) $.   Such convergence will be denoted $ ( \mu_{\omega,n}(f) ,  F_{\omega,n}(i)   ) \stackrel{\mathcal{D}}{\to} \left(   \mu_{\omega} , F_{\omega}(i)\right)  $. 
 \end{definition}
 
 By general theory of probability measures on complete separable metric spaces, of a finite diameter,  the  convergence of measures on $\Omega$ is equivalent to the condition that for any $\varepsilon >0$, there is $N(\varepsilon)<\infty$ such that for all $n \ge N(\varepsilon)$ the measures can be coupled so that:
 \be  \label{eq:coupling} 
 \int  d \mu_n(  \omega, \omega')  \left[ |G_\omega(0)-G_{\omega'}(0)| \+ d({\sigma_\omega},{\sigma_{\omega'}}) \right] \  \le \  \varepsilon
 \ee 
 with the marginals of $\mu_n(  \omega, \omega') $ yielding the distributions of $F_{\omega} $  and 
$ F_{\omega'} $,  correspondingly.  
(In case the distance function is unbounded, \eqref{eq:coupling} is to be replaced by the statement that  $[...]$ is small in probability, though possibly not in the mean.) \\

For future purpose let us also add
\begin{lemma} \label{lem:hole}
Let  $ F_{\omega,n} : \C^+ \mapsto \overline{\C^+}$ be a sequence of random HP functions which converges in distribution to a random HP function $ F_{\omega} $, and for which the support of the spectral measures stays away from an interval $[a, b] \subset \R$, in the sense that for some $\varepsilon >0$, and almost all $ \omega$ and all  $ n$:
\be 
\mu_{\omega,n}([a-\varepsilon,b+\varepsilon]) \ = \ 0 \,. 
\ee  
Then   the functions $F_{\omega,n}$ and $F_{\omega}$  are (almost surely)  analytic and real along $[a,b]$, and the convergens in distribution extends to:     $ ( \mu_{\omega,n},  F_{\omega,n}(x)   ) \stackrel{\mathcal{D}}{\to} \left(   \mu_{\omega} , F_{\omega}(x)\right)  $ for any $x\in [a,b]$.  
\end{lemma}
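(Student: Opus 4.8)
The plan is to exploit the ``hole'' in the spectrum to upgrade the distributional convergence of $(\mu_{\omega,n}, F_{\omega,n}(i))$ on $\widetilde\Omega$ to convergence of $(\mu_{\omega,n}, F_{\omega,n}(x))$ for $x\in[a,b]$, by showing that evaluation at $x\in[a,b]$ is a continuous functional on the relevant subset of $\widetilde\Omega$. First I would observe that whenever a spectral measure $\mu$ satisfies $\mu([a-\varepsilon,b+\varepsilon])=0$, the Herglotz representation \eqref{eq:F} shows that $F$ extends analytically across $(a-\varepsilon,b+\varepsilon)$, with $F(x)=F(x+i0)$ real there and given by the convergent integral $\int_{\R\setminus[a-\varepsilon,b+\varepsilon]}\big(\tfrac{1}{u-x}-\tfrac{u}{u^2+1}\big)\mu(du)+b+ax$; in particular $\Im F(x+i0)=0$ for $x\in[a,b]$, and analyticity holds for $F_{\omega,n}$ and $F_\omega$ almost surely since, by Definition~\ref{def:distconv}, the distributional limit of measures charging no mass on $[a-\varepsilon,b+\varepsilon]$ is again supported off $[a-\varepsilon,b+\varepsilon]$ (weak limits cannot create mass on an open set, so in fact $\mu_\omega([a-\varepsilon,b+\varepsilon])=0$ a.s.).

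Next I would set up the continuity argument. Fix $x\in[a,b]$ and consider the map $\Phi_x:(\mu,F(i))\mapsto F(x)$, defined on the subset $\Omega_\varepsilon\subset\widetilde\Omega$ of pairs with $\mu([a-\varepsilon,b+\varepsilon])=0$. The claim is that $\Phi_x$ is continuous on $\Omega_\varepsilon$ in the topology of Section~\ref{sec:pointwise_conv} (vague convergence of $\mu$ plus convergence of $F(i)$). Writing $F(x)-F(i)=\int\big(\tfrac{1}{u-x}-\tfrac{1}{u-i}\big)\mu(du)+a(x-i)=(x-i)\big[a+\int\tfrac{\mu(du)}{(u-x)(u-i)}\big]$, and recalling from \eqref{eq:massmuoverbar}-type bounds that $a=\Im F(i)-\int\tfrac{\mu(du)}{u^2+1}$, one sees that it suffices to show $\mu\mapsto\int h(u)\,\mu(du)$ is continuous for the bounded continuous kernel $h(u)=\tfrac{1}{(u-x)(u^2+1)}$ (or $\tfrac{1}{(u-x)(u-i)}$), which would be immediate for \emph{weak} convergence but requires care under mere \emph{vague} convergence because $h$ does not decay. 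Here the hole is decisive: since all the $\mu$'s in $\Omega_\varepsilon$ vanish on $[a-\varepsilon,b+\varepsilon]$, the kernel $u\mapsto\tfrac{1}{u-x}$ restricted to the support is bounded by $1/\varepsilon$ and continuous, and the standard device (used in the proof of Theorem~\ref{prop:topology}) of testing against $\tfrac{1}{u-i}$ plus a single mass-detecting function, together with the a priori bound $\mu_n(S)\le\Im G_n(0)\to\Im G(0)<\infty$, controls the tails; one then reduces to vague convergence against compactly supported continuous functions and is done.

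Finally, I would transfer this back to the probabilistic statement: by Definition~\ref{def:distconv} the laws of $(\mu_{\omega,n},F_{\omega,n}(i))$ converge weakly on $\widetilde\Omega$ and are supported on (the closure of) $\Omega_\varepsilon$; composing with the continuous map $(\mu,F(i))\mapsto(\mu,\Phi_x(\mu,F(i)))=(\mu,F(x))$ and invoking the continuous mapping theorem yields $(\mu_{\omega,n},F_{\omega,n}(x))\stackrel{\mathcal D}{\to}(\mu_\omega,F_\omega(x))$, as claimed. I expect the main obstacle to be the second paragraph: verifying that evaluation at a real point in the gap is genuinely continuous under the vague topology on measures, i.e. ruling out escape of mass to infinity affecting $F(x)$. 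This is handled exactly as in the $``C\Rightarrow A''$ part of the proof of Theorem~\ref{prop:topology} --- the convergence of $F_n(i)$ pins down both the total mass $\sigma_n(S)$ and, via resolvent identities and Stone--Weierstrass, the weak (not merely vague) limit of the measures --- so the argument reuses machinery already in place rather than introducing anything new.
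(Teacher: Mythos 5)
Your proposal is correct, but the key continuity step is carried out by a genuinely different mechanism than in the paper. The paper's proof also begins with analyticity across the gap, but then expresses $F_{\omega,n}(x)$ as a harmonic average over the circle $\{x+\tfrac{\varepsilon}{2}e^{i\theta}\}$, whose upper half is controlled by the uniform pointwise bound \eqref{eq:Gbound} and whose lower half is handled by noting that the analytic continuation into $\C^-$ through the gap is still given by the spectral representation; convergence in distribution then follows from the coupling characterization \eqref{eq:coupling}. You instead prove directly that $(\mu,F(i))\mapsto F(x)$ is continuous on the closed subset $\Omega_\varepsilon$ of $\widetilde\Omega$, using the identity $F(x)-F(i)=(x-i)\bigl[a+\int\frac{\mu(du)}{(u-x)(u-i)}\bigr]$, the boundedness of the kernel on the support guaranteed by the hole, and the tail control coming from $F_n(i)\to F(i)$ (i.e.\ the $C\Rightarrow A$ machinery of Theorem~\ref{prop:topology}, which upgrades vague to weak convergence of the circle measures $\sigma_n$); the continuous mapping theorem then finishes. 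The two routes buy different things: the paper's circle-averaging reuses the quantitative estimate \eqref{eq:Gbound} wholesale and never has to re-examine the spectral integral at a real point, at the cost of invoking the reflection into $\C^-$; your argument is more elementary and makes explicit exactly where escape of mass to infinity could spoil evaluation at $x$ and why the convergence of $F_n(i)$ rules it out. Your observations that the limit measure a.s.\ puts no mass on the open interval $(a-\varepsilon,b+\varepsilon)$ (portmanteau on the vaguely closed set of such measures) and that the kernel, after multiplication by $u^2+1$ and modification inside the hole, extends continuously to the point of $S$ corresponding to $u=\infty$ with value $1=\lim_u (u^2+1)/[(u-x)(u-i)]$ (thereby absorbing the $a$-term) are exactly the details needed to make this rigorous, so the argument is sound.
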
 
\begin{proof}  The analyticity of the functions $ F_{\omega,n}$ within  spectral gaps is a simple consequence of  the spectral representation.  Analyticity at $x\in [a,b]$ allows to applying the harmonic average principle to the analytic continuation of $F_{\omega,n}$ through the spectral gap which includes $[a,b]$,  by which:
\be 
F_{\omega,n}(x) \= \int_{[0,2\pi]} F_{\omega,n}(x+ e^{i\theta} \varepsilon / 2)\; \frac{d \theta}{2\pi}  
\ee 
The convergence in distribution then readily follows from the coupling estimate \eqref{eq:coupling} and the {\it uniform} pointwise bound \eqref{eq:Gbound} (and the observation that the analytic continuation of such a HP function into $\C^-$ is given by the natural extension  of the spectral representation to that regime).   
\end{proof}

\subsection{Translation invariance and its consequences} \label{subset:stationary}

\subsection{Shift invariance and shift amenability}

In the  language of probabilistic ergodic theory, the subject may be presented in the following terms.      
 
%

Random functions are parametrized by a variable $\omega$ taking values in  a probability space 
 $ (\Omega_0, \mathcal{A}, \mathbb{P} ) $   (for which  a possible choice  is itself the suitable space of functions such as $\Omega$ discussed above).  The random functions are given by a $\C$-valued kernel $K_\omega(x)$  defined over 
 $ \Omega \times \R$   such that  $K_\omega(x)$ is jointly measurable over $\Omega \times \R$ (to which may optionally be added topological properties, such as discussed above).  
Translation invariance, or the more limited invariance under discrete shifts, is expressed in the two additional properties: 
 \begin{enumerate}    
\item  acting on $\Omega_0$ is  a group of measurable mappings $\{\mathcal{T}_u\}_{u\in \R}$ 
which provides a representation of the group of translations of $\R$, with
\be  \label{shift_def}
K_{T_u \omega} (x) = K_{\omega} (x+u)  \qquad  \mbox{(for almost every $(\omega,x)$)} \,, 
\ee 
\item  the probability measure $\mathbb{P} $  is invariant under  the action of the shifts $\mathcal{T}_u$, or at least under the action  of a discrete sub group $\{\mathcal{T}_{n\tau} \}_{n\in \Z}$ of period $\tau$.  
\end{enumerate}

In the above setup, let $\ell_\omega(dw)$  be the pullback measure of the conditional distribution of the values of  $K_\omega(x)$ with $x$ averaged with the Lebesgue measure over $ [0,\tau]$    at given $\omega$.  In other words,  $\ell_\omega(dw)$   is defined so  that for each continuous bounded function $\Psi: \C \mapsto \C$ 
\be 
\int_\C \Psi(w) \ell_\omega(dw) \ = \  \int_{0}^{ \tau} \Psi(K_\omega(x)) \, dx \ =: \  A_{\Psi,K}(\omega) \, .
\ee 
The average seen in \eqref{Llim} can be presented through the relation: 
\begin{eqnarray}  \label{212}
\frac{1}{L} \int_{0}^{L \tau} \Psi(K(x) ) \, dx \  = \  \frac{1}{L}  \sum_{n=0}^{L-1} A_{\Psi,K}(\mathcal{T}_{n\tau} \omega)   \, ,
\end{eqnarray}    
Birkhoff's ergodic theorem allows then to conclude that for $\mathbb {P}$-almost every $\omega$ 
 the function $K_\omega$ is shift-amenable (over $x$).  
 Furthermore, 
$
\nu_{K_\omega} \ = \ \tau \lim_{L \to \infty} \frac{1}{L}  \sum_{n=0}^{L-1}   \ell_{\mathcal{T}_{n\tau}  \omega}  \, , 
$  
and as is easily seen: 
\be 
\nu_{K_\omega} = \nu_{K_{T_\tau}\omega} 
\ee
This implies also that in the ergodic case $\nu_{K_\omega} $ is almost surely given by a common measure (on $\overbar {\text {Range }  K}$).


In a slight abuse of notation we shall generically use the symbol $\mathcal{T}_u$ for translations corresponding to shifts of $\R$, i.e. for both the transformations on $\Omega_0$ and for their induced actions on functions and measures on $\R$.   
The explicit form of this mapping in the representation of random $HP$ functions which was introduced in Section~\ref{sec:rand_HP}, 
for which $\Omega_0 = \widetilde \Omega$ and $\omega = (\mu,F(i)) $ is easily seen to take the form of the {\it co-cycle} evolution:   
\be\label{eq:transonpair}
\mathcal{T}_u (\mu,\beta) \= \left(\mathcal{T}_u\mu, \, \beta  + Q(u,\mu)  \+  u\, a   \right)
\ee 
with 
\be 
a := \Im \beta - \int \frac{\mu(dx)}{x^2+1} \, , \qquad Q(u,\mu)  \= \int\left[ \frac{1}{x-u - i} - \frac{1}{x-i} \right]  \mu(dx)   \,, 
\ee 
and  $\mathcal{T}_u \mu$ the usual shift of measure, i.e. $\left( \mathcal{T}_u \mu\right)(I) =\mu(I+u) $ for every bounded Borel set $ I \subset \R $.

Focusing on this case we take as definition: 
\begin{definition}  
 A   probability measure on $\widetilde \Omega$  is  \emph{stationary} (or {\it translation invariant}) if and only if it is invariant under the mapping induced on it by the above defined mappings
  $\{ \mathcal{T}_u \}_{u \in \R}$.  \\
  Equivalently, we will refer to the corresponding random HP function as stationary.  
\end{definition}

The following observation makes the results of Section~\ref{sec:Cauchy} applicable to stationary HP functions.

\begin{lemma} \label{lem:Psi} 
Let $ F_\omega $ be a stationary random HP function, and $F_{0,\omega}(x) := F_\omega(x+i0)$.  Then: 
\begin{enumerate}
\item With probability one the function $F_{0,\omega}(x)$ is shift amenable and the corresponding measures $\nu_{F_{0,\omega}}$ are constant on ergodic components of the probability measure.    
\item    For each bounded continuous $\Psi : \overline{\C^+} \to \C $ which is analytic on $ \C^+$: 
\be\label{202a}   \mathbb{E}\left[ \Psi(F(z)) \right] \=  \mathbb{E}\left[ \Psi(F(x+i0)) \right] 
\ee 
for all $z\in \C^+$ and $x\in \R$. 
\end{enumerate} 
 \end{lemma}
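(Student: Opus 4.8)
The plan is to derive both statements from the structure developed earlier in the paper, reducing part 1 to the Birkhoff-type discussion of Subsection \ref{subset:stationary} and part 2 to Lemma \ref{lem:lim} together with a stationarity/ergodic-decomposition argument. First I would establish part 1. By Proposition \ref{thm:dlVP} the boundary values $F_{0,\omega}(x) = F_\omega(x+i0)$ exist for Lebesgue-a.e.\ $x$, for every $\omega$ in a set of full $\mathbb P$-measure; joint measurability of $(\omega,x)\mapsto F_{0,\omega}(x)$ follows since it is an a.e.\ pointwise limit of the jointly measurable maps $(\omega,x)\mapsto F_\omega(x+i\eta)$ along $\eta\downarrow 0$. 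The co-cycle identity \eqref{eq:transonpair} shows that the shift $\mathcal T_u$ acts on the pair $(\mu_\omega, F_\omega(i))$ in a way compatible with the shift of $\R$, and hence $F_{0,\mathcal T_u\omega}(x) = F_{0,\omega}(x+u)$ for a.e.\ $(\omega,x)$, i.e.\ the kernel $K_\omega(x) := F_{0,\omega}(x)$ satisfies \eqref{shift_def}. Since by hypothesis $\mathbb P$ is $\mathcal T_u$-invariant, the general discussion around \eqref{212} applies verbatim: Birkhoff's ergodic theorem gives, for $\mathbb P$-a.e.\ $\omega$, the existence of the limit \eqref{Llim} for every bounded continuous $\Psi$ (one first fixes a countable convergence-determining family of $\Psi$'s, applies Birkhoff to each, and then upgrades to all bounded continuous $\Psi$ by a standard approximation/tightness argument), so $F_{0,\omega}$ is shift amenable, and the resulting measure $\nu_{F_{0,\omega}}$ is the Birkhoff average of the conditional laws $\ell_\omega$, hence invariant under $\mathcal T_u$ and therefore constant on each ergodic component.

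For part 2, fix a bounded continuous $\Psi:\overline{\C^+}\to\C$ which is analytic on $\C^+$. Conditioning on an ergodic component and invoking part 1, on that component $\nu_{F_{0,\omega}}$ equals a deterministic measure $\nu$, so by Lemma \ref{lem:lim}(1) we have, for every $x\in\R$ and $\mathbb P$-a.e.\ $\omega$ in the component,
\be
\nu(\Psi) \ = \ \lim_{\eta\to\infty} \Psi\big(F_\omega(x+i\eta)\big)\,.
\ee
On the other hand, $z\mapsto\Psi(F_\omega(z))$ is bounded and analytic on $\C^+$; by stationarity the law of $F_\omega(z)$ is for each fixed $z\in\C^+$ invariant under the co-cycle shift, but more to the point, the bounded-convergence theorem applied to the $\eta\to\infty$ limit above together with the Poisson representation \eqref{eq:useof_g} yields $\mathbb E[\Psi(F(x+i0))] = \mathbb E[\lim_{\eta\to\infty}\Psi(F(x+i\eta))] = \nu(\Psi)$ when restricted to the ergodic component (using that $\Psi$ is bounded so that dominated convergence applies, and that the boundary-value integrand in \eqref{eq:useof_g} is $\Psi(F_{0,\omega}(u))$). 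It remains to see that $\mathbb E[\Psi(F(z))]$ does not depend on $z\in\C^+$: this is exactly the statement that $z\mapsto\mathbb E[\Psi(F(z))]$ is constant. That function is analytic on $\C^+$ (differentiation under the expectation is justified by the uniform bound on $\Psi$ and Cauchy estimates for $\Psi\circ F_\omega$), and along the vertical line $z=x+i\eta$ its $\eta\to\infty$ limit equals $\nu(\Psi)$ uniformly in $x$; an analytic function on $\C^+$ whose value along every horizontal slice converges to the same constant as $\Im z\to\infty$ must be that constant, for instance by applying the Poisson representation \eqref{eq:useof_g} to the bounded analytic function $z\mapsto\mathbb E[\Psi(F(z))]$ itself and letting $\Im z\to\infty$ via Lemma \ref{lem:g}. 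Assembling, $\mathbb E[\Psi(F(z))] = \nu(\Psi) = \mathbb E[\Psi(F(x+i0))]$ on each ergodic component; integrating over the ergodic decomposition gives \eqref{202a} in general.

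The main obstacle, I expect, is the passage from pointwise a.s.\ convergence of $\Psi(F_\omega(x+i\eta))$ to the $\nu_{F_{0,\omega}}$-integral interchanged with the expectation over $\omega$ — i.e.\ making rigorous that $\mathbb E$ and the $\eta\to\infty$ limit commute and that the null sets in the a.e.\ statement of Proposition \ref{thm:dlVP} and in Birkhoff's theorem can be chosen uniformly in the relevant parameters ($x$, the test functions $\Psi$, and the ergodic component). This is handled by the boundedness of all $\Psi$ under consideration (so dominated convergence is available throughout), by working with a fixed countable convergence-determining family of test functions, and by using Fubini to exchange the Lebesgue average over $x$ with the $\omega$-integration where needed; none of these steps is deep, but some care is required to keep the exceptional sets under control.
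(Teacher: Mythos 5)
Part~1 of your proposal follows the paper's route exactly: the paper also reduces it to the Birkhoff discussion around \eqref{212}, after noting that $K_\omega(x):=F_{0,\omega}(x)$ satisfies the shift relation \eqref{shift_def}.

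For part~2 the paper argues in the opposite direction and much more directly: translation invariance makes $z\mapsto\mathbb E[\Psi(F(z))]$ independent of $\Re z$; since this function is analytic on $\C^+$, Cauchy--Riemann forces it to be constant there; one then reaches the boundary by letting $\Im z\downarrow 0$ with Proposition~\ref{thm:dlVP} and dominated convergence. Your route (identify both sides with $\nu(\Psi)$ via the $\eta\to\infty$ limit, on each ergodic component) can be made to work, but as written it has two problems. First, the principle you invoke --- ``an analytic function on $\C^+$ whose value along every vertical line converges to the same constant as $\Im z\to\infty$ must be that constant'' --- is false: $e^{iz}$ is bounded and analytic on $\C^+$ and tends to $0$ along every vertical line, yet is not constant. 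What actually closes your argument is not the $\eta\to\infty$ limit but the fact that the boundary values $u\mapsto\mathbb E[\Psi(F(u+i0))]$ are \emph{constant in $u$} (the stationarity of the boundary-value distribution, obtained from $F_{\mathcal T_a\omega}(u+i0)=F_\omega(u+a+i0)$ plus $\mathcal T_a$-invariance of $\mathbb P$); once that is stated, the Poisson representation of $z\mapsto\mathbb E[\Psi(F(z))]$ immediately gives constancy on all of $\C^+$, and also justifies the step $\mathbb E[\Psi(F(x+i0))]=\lim_{\eta\to\infty}\mathbb E[\Psi(F(x+i\eta))]$, which is otherwise unsupported (the boundary value sits at $\eta\downarrow 0$ while your limit is at $\eta\to\infty$). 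You leave this key use of stationarity implicit, and it is the one place where the hypothesis genuinely enters. Second, the detour through ergodic components, Lemma~\ref{lem:lim} and $\nu(\Psi)$ is unnecessary: the paper's part~2 uses neither part~1 nor the ergodic decomposition. I would recommend either adopting the paper's argument or, if you keep yours, replacing the false ``vertical line'' principle by the explicit statement that stationarity makes the boundary expectation independent of $u$.
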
 
 \begin{proof}
The first assertion   is readily implied by Birkhoff's ergodic theorem, as is explained above \eqref{212}. 

For the second, we note that by translation invariance $ \mathbb{E}\left[  \Psi(F(z)) \right] $ does not depend on $x := \Re z$.   Since under the assumptions it  forms an analytic function of $z\in \mathcal{C}^+$, it follows that it also does not depend on $y= \Im z$.   One may then deduce \eqref{202a} using Proposition~\ref{thm:dlVP} and applying the dominated convergence theorem to the limit $y\downarrow 0$ .  \\
\end{proof} 

 Thus Theorem~\ref{thm:Cauchy} is applicable to such functions.    Let us note also the following implications of stationarity.

\begin{theorem} \label{lem:Aconst}
Let $F_\omega$ be a   stationary random HP function for which $\Im F(x+i0) = 0$ almost surely (separately at each $x\in \R$).  Then:
\begin{enumerate} 
\item  for almost all $ \omega $:  
$ 
a_\omega =0  \, .  
$ 
\item  If the process is also  ergodic then for each $x\in \R$  the random variable $F_\omega(x+i0)$ has the Cauchy distribution of width: 
 \be\label{eq:imconv}
\Im \Gamma \ \equiv \  \lim_{\eta\to \infty}  \Im F_\omega(x+i\eta)   \= \pi \rho \, ,  
 \ee  
where $\Gamma$ is the distribution's analytic baricenter, as defied in \eqref{eq:Cauchy}, and 
$\rho = \Ev{\mu_F([0,1)}$.
\end{enumerate}    
\end{theorem}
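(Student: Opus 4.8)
\textbf{Proof proposal for Theorem~\ref{lem:Aconst}.}

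The plan is to derive both assertions from the spectral representation together with the Cauchy-law structure already established in Theorem~\ref{thm:Cauchy} and Lemma~\ref{lem:lim}. For part (1), the idea is that a nonzero linear coefficient $a_\omega$ would force $\Im F_\omega(x+i\eta)$ to grow linearly in $\eta$, which is incompatible with the boundedness of the analytic baricenter $\Gamma$ of a genuine Cauchy law. More concretely, I would recall from \eqref{eq:F} that $\Im F_\omega(x+i\eta) = a_\omega\,\eta + \int \eta\,(u-x)^{-2}\,((u-x)^2+\eta^2)^{-1}(u^2+1)\,\mu_\omega(du)$-type expression; by Lemma~\ref{lem:lim}(2) (which applies, since Lemma~\ref{lem:Psi}(1) gives shift amenability of $F_{0,\omega}$ almost surely under the hypothesis $\Im F(x+i0)=0$), the limit $\lim_{\eta\to\infty} F_\omega(x+i\eta) = \Gamma_\omega$ exists in $\overline{\mathbb{C}^+}$ and is finite. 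Since the contribution of the integral term to $\Im F_\omega(x+i\eta)$ stays bounded (indeed tends to $\pi\,\times$ the asymptotic density, as in part (2)), finiteness of $\Im\Gamma_\omega$ forces $a_\omega=0$ for almost every $\omega$. One should phrase this carefully using that $\Im F_\omega(x+i\eta)/\eta \to a_\omega$ always holds for HP functions and comparing with $\Im F_\omega(x+i\eta)\to\Im\Gamma_\omega<\infty$.

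For part (2), under the additional ergodicity assumption, Lemma~\ref{lem:Psi}(1) gives that $\nu_{F_{0,\omega}}$ is almost surely a fixed measure $\nu$, and Theorem~\ref{thm:Cauchy} identifies it as a Cauchy law with some baricenter $\Gamma$; by Lemma~\ref{lem:lim}(2), $\Im\Gamma = \lim_{\eta\to\infty}\Im F_\omega(x+i\eta)$ for a.e.\ $\omega$, independent of $x$. It remains to evaluate this limit as $\pi\rho$ with $\rho = \E[\mu_F([0,1))]$. Using part (1) (so $a_\omega=0$) and the representation \eqref{eq:F}, I would write
\be
\Im F_\omega(x+i\eta) \= \int_\R \frac{\eta}{(u-x)^2+\eta^2}\,\mu_\omega(du) \,,
\ee
which is $\pi$ times the Poisson average of $\mu_\omega$ at height $\eta$. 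Taking expectations and using stationarity, $\E[\Im F_\omega(x+i\eta)] = \pi \int_\R \frac{\eta/\pi}{(u-x)^2+\eta^2}\,\overline\mu(du)$ where $\overline\mu(du) = \rho\,du$ is the (translation-invariant) mean intensity measure; since $\int_\R \frac{\eta/\pi}{(u-x)^2+\eta^2}\,du = 1$, this expectation equals $\pi\rho$ for every $\eta$. Combined with the almost-sure convergence $\Im F_\omega(x+i\eta)\to\Im\Gamma$ (a constant) and a uniform-integrability / dominated-convergence argument to pass the expectation through the limit, one concludes $\Im\Gamma = \pi\rho$.

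The main obstacle I anticipate is the interchange of expectation and the limit $\eta\to\infty$ in part (2): the random variables $\Im F_\omega(x+i\eta)$ are nonnegative but one needs some control — e.g.\ uniform integrability — to conclude $\E[\lim] = \lim\E[\cdot] = \pi\rho$ and hence that the a.s.\ constant limit equals $\pi\rho$. This should follow because $\Im F_\omega(x+i\eta)$ is monotone-ish in a usable sense, or because its expectation is constant ($=\pi\rho$) in $\eta$ so the family $\{\Im F_\omega(x+i\eta)\}_\eta$ is bounded in $L^1$ with constant mean, and Fatou gives $\E[\Im\Gamma]\le\pi\rho$ while a reverse bound comes from the Poisson representation being an average (so $\Im F_\omega(x+i\eta) \le \|\cdot\|$-type control is not available, but one can instead note that along the convergence the mean is \emph{exactly} $\pi\rho$ for all $\eta$, and the limit being a.s.\ constant forces that constant to be $\pi\rho$ provided the convergence also holds in $L^1$, which Scheffé's lemma delivers once a.s.\ convergence and convergence of the ($L^1$-)norms are both in hand — here the norms are identically $\pi\rho$). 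A secondary technical point is justifying that the mean intensity $\overline\mu$ of a stationary random measure with $\E[\mu_F([0,1))]=\rho<\infty$ is exactly $\rho\,dx$, which is standard.
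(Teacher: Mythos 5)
Your part (1) takes a genuinely different route from the paper's. The paper argues via characteristic functions: $|e^{itF(x+iy)}|\le e^{-t a y}$, while Lemma~\ref{lem:Psi}(2) makes $\mathbb{E}\bigl[e^{itF(x+iy)}\bigr]$ independent of $y$; letting $y\to\infty$ and then $t\downarrow 0$ yields $\Pr{a=0}=1$. Your route — shift amenability (Lemma~\ref{lem:Psi}(1)) plus Lemma~\ref{lem:lim}(2) gives $F_\omega(x+i\eta)\to\Gamma_\omega$ finite, which is incompatible with $\Im F_\omega(x+i\eta)\ge a_\omega\eta$ — is also sound, but you should justify the one point you assert without proof, namely that $\Gamma_\omega$ is \emph{finite}: this holds because $\Im\bigl(-\int\nu_{F_0}(dw)/(w+i)\bigr)=\int (w^2+1)^{-1}\nu_{F_0}(dw)>0$ for a probability measure $\nu_{F_0}$ on $\R$, so the reciprocal in \eqref{eq:Gamma2} is well defined. (Equivalently: if $a_\omega>0$ then $\Psi(F(x+i\eta))\to 0$ for $\Psi(w)=-1/(w+i)$, contradicting $\nu_{F_0}(\Psi)\ne 0$.)

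In part (2) there is a genuine gap at the final step. Fatou indeed gives $\Im\Gamma=\mathbb{E}[\Im\Gamma]\le\liminf_\eta\mathbb{E}[\Im F_\omega(x+i\eta)]=\pi\rho$, but your proposed reverse inequality is circular: Scheff\'e's lemma requires as a \emph{hypothesis} that $\mathbb{E}[\Im F_\omega(x+i\eta)]\to\mathbb{E}[\Im\Gamma]$, i.e.\ exactly the identity $\pi\rho=\Im\Gamma$ you are trying to prove; and constancy of the means does not by itself give uniform integrability (mass of $\int\eta\,((u-x)^2+\eta^2)^{-1}\mu_\omega(du)$ could in principle escape along a subsequence of $\omega$'s). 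The paper closes this without any interchange of limit and expectation: by Birkhoff's ergodic theorem applied to the stationary ergodic random measure $\mu_\omega$ itself, $\mu_\omega([-L/2,L/2])/L\to\rho$ almost surely, and the layer-cake argument of Lemma~\ref{lem:g} (with $g(u)=\pi^{-1}(u^2+1)^{-1}$, as in \eqref{eq:useof_g}) upgrades this to the almost-sure statement $\int\eta\,((u-x)^2+\eta^2)^{-1}\mu_\omega(du)\to\pi\rho$, which is \eqref{eq:imconv2}. You should replace your $L^1$ argument by this pathwise one. The remainder of your part (2) — ergodicity making $\nu_{F_{0,\omega}}$ a.s.\ constant and Theorem~\ref{thm:Cauchy} identifying it as Cauchy — matches the paper, though it is worth one sentence to transfer the shift-distribution $\nu_{F_0}$ to the distribution of $F_\omega(x+i0)$ over $\omega$ at fixed $x$ (again Birkhoff plus stationarity).
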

\begin{proof}
1.   For any $x\in \R$ and $t>0$: 
\be   
\mathbb{P}(a =0 ) \=  \mathbb{E}\left( \id [ a=0 ] \right) \ \ge \  \lim_{y\to \infty} 
\mathbb{E}\left[ e^{itF(x+i y)} \right]  \=   
\mathbb{E}\left[ e^{itF(i )} \right]
 \, ,
 \ee 
where the inequality is by the observation that $\Im F(x+iy) \ge a y$, and the last equality by Lemma~\ref{lem:Psi}
Taking now $t\to 0$ we conclude (applying again the dominated convergence theorem):
\be
\Pr{a =0} \ \ge \  \lim_{t \to 0} 
\mathbb{E}\left[ e^{itF( i )} \right] \= 1 \,. \\ 
\ee  

2.    By Birkhoff's theorem, for ergodic processes averages over $\omega$ yield (almost surely) the same result as averages over shifts,  and thus the Cauchy nature of the distribution follows from 
Theorem~\ref{thm:Cauchy}.  The value of the  distribution's analytic baricenter  is determined from the spectral representation \eqref{eq:F} applying Lemma~\ref{lem:g}  
with $g(u) = \pi^{-1}/(u^2+1)$ as in \eqref{eq:useof_g}:    
 \be\label{eq:imconv2}
\Im \Gamma \= \lim_{\eta\to \infty}  \Im F_\omega(x+i\eta)  = \lim_{\eta\to \infty}  \frac{\pi}{\eta} \int g(|u-x|/\eta) \mu_\omega(du) \ \stackrel{a.s.}{=} \   \pi \rho \, .  \\ 
 \ee
 \end{proof}

\noindent{\bf Remarks:} 
 \begin{enumerate}
 \item The center $ \Re \Gamma \in \R $ of the Cauchy distribution of $F_\omega(x+i0)$ is not determined from the spectral measure alone, since      
 adding a real constant to a random, ergodic HP function produces another such  function with  a different value of this parameter. 
 \item Inspecting the above proof shows that one may exchange the assumption of ergodicity in the above theorem by requiring i)~stationarity of the HP function together with ii)~the distributional convergence $ F(i\eta) \to\Gamma $ with some $ \Gamma\in \C^+ $. 
 \end{enumerate}
 
 \medskip

\subsection{A cocycle criterion}

  Clearly, for any shift invariant random HP function $F_\omega$ the spectral measure $\mu_\omega$ forms a stationary random measure on $\R$, which in the discrete case corresponds to a point process.    One may ask about the converse direction: under what conditions would a random measure on $\R$  with a translation invariant distribution (and a.s. satisfying \eqref{eq:HN_norm}) be  the spectral measure  of a stationary random HP function? \\  

It is easy to see that  \eqref{eq:HN_norm}  suffices for the association with $\mu_\omega$  of the function 
\be 
K_\omega(z) \=   \int \frac{\mu_\omega(dx) }{(x-z)^2}  \, , 
 \ee 
 which is  holomorphic over $ \C^+$ and which inherits  the stationarity of $\mu_\omega$.   
The above question can therefore be rephrased as asking under what conditions would $K_\omega(z)$ be the derivative of a stationary random HP function.   For that a standard ergodic theory argument is of relevance.  \\  
 
  \begin{theorem}   \label{prop:cocycle}
  Let $\mu_\omega$ be a stationary random measure on $\R$ (given by a measurable function from a probability space to $\mathcal M (\R)$), satisfying (almost surely) \eqref{eq:HN_norm}.   
  Then $\mu_\omega$ may be extended to the spectral measure of a random stationary HP function if and only if the cocycle  
  \be 
   \Re Q(u,\mu_\omega)  =  \Re \int  \left[\frac{1}{x-u - i} - \frac{1}{x-i} \right]  \mu_\omega(dx) 
  \ee 
  is tight.  That is, if and only if: 
  \be \label{eq:Retight}
  \lim_{t\to \infty} \,  \sup_{u\in \R}  \,  
 \Pr{  \left| \Re  \int\left[\frac{1}{x-u - i} - \frac{1}{x-i} \right]  \mu(dx) \right |\, > \, t }   \= 0 
  \ee 
  \end{theorem}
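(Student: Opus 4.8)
\quad The plan is to read \eqref{eq:Retight} as the tightness of the additive cocycle through which the shift acts on the ``constant term'' of an HP function carrying $\mu_\omega$, and to convert that tightness into a shift-invariant probability law by a Krylov--Bogolyubov averaging. First I would record the reduction. Every stationary random HP function has $a_\omega=0$ almost surely (the characteristic-function argument in the proof of Theorem~\ref{lem:Aconst}(1), which bounds $|\Ev{e^{itF(i)}}|\le\Pr{a=0}$ for each $t>0$ using $\Im F(iy)\ge ay$ and Lemma~\ref{lem:Psi} and then lets $t\downarrow 0$, invokes only stationarity, not the singular-spectrum hypothesis). Hence on the support of any stationary law on $\widetilde\Omega$ one has $\Im F(i)=\int\mu(dx)/(x^2+1)$, the second coordinate collapses to the scalar $b:=\Re F(i)\in\R$, and by \eqref{eq:transonpair} (the $ua$ term vanishing) the shift acts by $\mathcal T_u(\mu,b)=\bigl(\mathcal T_u\mu,\ b+\Re Q(u,\mu)\bigr)$, i.e.\ as a skew product over the shift of $\mu$ driven by the real additive cocycle $c(u,\mu):=\Re Q(u,\mu)$. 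Thus the claim becomes the equivalence: the law of $\mu_\omega$ is the $\mu$-marginal of some $\{\mathcal T_u\}_{u\in\R}$-invariant probability measure on $\widetilde\Omega$ \emph{iff} the family of laws of $c(u,\mu_\omega)$, $u\in\R$, is uniformly tight.

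The necessity is the easy half. Given such an invariant $\mathbb P$, let $\lambda$ denote the law of $b$ under it; invariance forces $b+c(u,\mu)$ to have law $\lambda$ as well, for every $u$, and from the pointwise bound $|c(u,\mu)|\le |b|+|b+c(u,\mu)|$ one obtains, for all $t>0$,
\be
\Pr{\,|c(u,\mu_\omega)|>2t\,}\ \le\ \lambda(\{|y|>t\})\+\lambda(\{|y|>t\})\ ,
\ee
whose right side tends to $0$ as $t\to\infty$ uniformly in $u$ --- which is precisely \eqref{eq:Retight}.

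For sufficiency I would run Krylov--Bogolyubov directly on $\widetilde\Omega$. Let $\mathbb P_\mu$ be the law of $\mu_\omega$, let $\iota(\mu):=\bigl(\mu,\ i\int\mu(dx)/(x^2+1)\bigr)$ be the HP function attached to $\mu$ with $a=b=0$, and set $\rho_T:=\tfrac1T\int_0^T (\mathcal T_u)_*\bigl(\iota_*\mathbb P_\mu\bigr)\,du$. Stationarity of $\mu_\omega$ makes the $\mu$-marginal of every $\rho_T$ equal to $\mathbb P_\mu$; under $\rho_T$ the coordinate $\Im F(i)=\int\mu(dx)/(x^2+1)$ has a fixed law in the $\mu$-variable (tight, a.s.\ finite by \eqref{eq:HN_norm}), while $\Re F(i)$ has the averaged law $\tfrac1T\int_0^T(\text{law of }c(u,\mu_\omega))\,du$, uniformly tight by \eqref{eq:Retight}. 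Hence $\{\rho_T\}_{T>0}$ is tight on the Polish space $\widetilde\Omega$, so along a subsequence $\rho_{T_k}\rightharpoonup\rho$ by Prokhorov; the standard estimate $\|(\mathcal T_s)_*\rho_T-\rho_T\|_{\mathrm{TV}}\le 2|s|/T\to 0$ together with continuity of $\mathcal T_s$ (Theorem~\ref{prop:topology}) shows $\rho$ is $\{\mathcal T_u\}_{u}$-invariant, while its $\mu$-marginal is the weak limit of the constant sequence $\mathbb P_\mu$, i.e.\ $\mathbb P_\mu$ again. Thus $\rho$ is a stationary random HP function whose spectral measure is distributed as $\mu_\omega$.

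The genuine content is the dichotomy ``tightness of an $\R$-valued cocycle $\Leftrightarrow$ existence of a skew-product invariant probability measure projecting onto the base'', which is exactly the averaging construction above and is the ``standard ergodic theory'' input the statement alludes to; I expect the remaining effort to be soft bookkeeping: (i) the reduction $a_\omega=0$, so that the second coordinate really collapses to the scalar cocycle $c(u,\mu)=\Re\int u\,\mu(dx)/[(x-u-i)(x-i)]$; (ii) continuity of $\mathcal T_u$ on $\widetilde\Omega$ and joint measurability of $c$, both of which follow from Theorem~\ref{prop:topology} and the explicit formula for $Q$, and which are what make Prokhorov and the Krylov--Bogolyubov limiting step applicable; and (iii) confirming that coordinate-wise tightness yields tightness in the flat metric of $\widetilde\Omega$, which holds because the variational and Wasserstein parts of that metric are controlled by $\int\mu(dx)/(x^2+1)$ and by the location of $\Re F(i)$.
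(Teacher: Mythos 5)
Your proposal is correct in substance but takes a genuinely different route from the paper on the sufficiency direction. The necessity half is essentially identical: the paper writes $Q(u,\mu_\omega)=F_\omega(i+u)-F_\omega(i)$ and notes that both terms share one fixed (hence tight) law, which is your $|c(u,\mu)|\le|b|+|b+c(u,\mu)|$ estimate; and your observation that part~1 of Theorem~\ref{lem:Aconst} ($a_\omega=0$) uses only stationarity is accurate and is exactly how the paper justifies the form~\eqref{eq:HSform2} of a stationary $F_\omega$. For sufficiency, however, the paper does not average: it invokes Schmidt's theorem that a tight real cocycle over a measure-preserving action is a coboundary, $\Re Q(u,\mu_\omega)=b_{\mathcal T_u\omega}-b_\omega$, and then defines $F_\omega$ by~\eqref{eq:HSform2}. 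This produces $b_\omega$ as a measurable function on the \emph{original} probability space, so the resulting stationary $F_\omega$ has spectral measure exactly $\mu_\omega$, pathwise. Your Krylov--Bogolyubov construction (Ces\`aro averages $\rho_T$ of $(\mathcal T_u)_*\iota_*\mathbb P_\mu$, tightness of both marginals, Prokhorov, and the $2|s|/T$ total-variation estimate plus continuity of $\mathcal T_s$ from Theorem~\ref{prop:topology}) is sound and more self-contained, but it delivers only an invariant law on $\widetilde\Omega$ whose $\mu$-marginal is the law of $\mu_\omega$: the constant term is a conditional distribution given $\mu$, not a measurable function of $\omega$ satisfying the cocycle identity. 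Within the paper's formalism (Definition~\ref{def:distconv} identifies a random HP function with a law on $\widetilde\Omega$) this is an acceptable reading of ``extended,'' but it is strictly weaker than the coboundary conclusion, and upgrading your invariant joining to a measurable coboundary is essentially the content of Schmidt's theorem rather than a soft bookkeeping step. In effect, your argument re-derives (a distributional form of) the half of Schmidt's theorem that the paper cites as a black box; what you lose is the pointwise extension, what you gain is an elementary, self-contained proof.
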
 
  
 \begin{proof} 
By a general result in ergodic theory~\cite{Schm} the tightness condition \eqref{eq:Retight}  allows to conclude  that the cocycle is a coboundary, i.e. there exists a measurable map $ b: \widetilde \Omega \to \R $ such that for all $ u \in \R $:
\be
 \Re Q(u,\mu_\omega)  = b_{\mathcal{T}_u\omega} - b_{\omega} \, . 
\ee
The HP function given by
\be\label{eq:HSform2}
F_\omega(z) =  b_{\omega} + \int \left[ \frac{1}{x-z} - \frac{x}{x^2+1} \right] \mu_\omega(dx) 
\ee
is then i)~almost surely well defined by~\eqref{eq:HN_norm} and ii)~easily seen to be stationary.

Conversely, if the random HP function $ F_\omega $ is stationary, it is of the form~\eqref{eq:HSform2} with $ b_\omega =   \Re F_\omega(i) $ and 
$ \Im F_\omega(i) = \int  \frac{\mu_\omega(dx)  }{x^2+1} $  (by Theorem~\ref{lem:Aconst}).  
Therefore $ Q(u,\mu_\omega) = F_\omega(i+u) - F_\omega(i) $ forms a tight collection of random variables indexed by $ u \in \R $.  
  \end{proof} 
  
  \medskip 
  
\section{Convergence criteria for the scaling limit of random HP functions}

HP functions often appear as the resolvent functions of random hermitian $n\times n$ matrices $H_{\omega,n}$ for which it is of interest to gain understanding of the  behavior of the spectra in the limit $n\to \infty$.  Examples were given in~\eqref{eq:trace}.
If the norm $\|H_{\omega,n}\|$ remains  uniformly bounded, the relevant spectra  consist of $n$ points whose gaps may typically be of order $O(n^{-1})$.  
To study this function at that level of resolution in the vicinity of an energy $E_0$ (which in principle could also depend on $n$, or be randomized in the vicinity of a target value), it is natural to enquire about the possible convergence in distribution of the random HP functions
\be   \label{eq:FTrace}
F_{\omega,n}( z) \  := \   R_{\omega,n}(E_0+z/n)  \, . \\ 
\ee  

\subsection{Convergence off the real axis} 

For  the examples~\eqref{eq:trace}, the convergence of the distribution of the spectral measure $ \mu_{\omega,n} = \sum_j \delta_{ x_{ \omega,n}(j)} $ in essence is a local statement about the behavior of the function's singularities.   The  information which is added to it through   $F_{\omega,n}(i)$   reflects the local effect of the   tails of the spectral measure, which affect its Stieltjes transform in the vicinity of $E_0$.   In many cases 
 of interest one may expect the  
contribution from the ``distant'' values of the spectrum to have only asymptotically vanishing fluctuations.   In such situations, the following theorem provides a handy criterion for the convergence in distribution of a sequences of HP functions. \\ 

\begin{theorem}\label{thm:suffcrit}
A sufficient condition for  a sequence  $ F_{\omega,n}  $ of random HP functions to converge in distribution to  a random HP function $F_\omega $ (i.e. for $\left(  \mu_{\omega,n},  F_{\omega,n}(i)   \right)  \stackrel{\mathcal{D}}{\to} \left(   \mu_{\omega} , F_{\omega}(i)\right) $) is that:
\begin{enumerate}
\item the corresponding random spectral measures $\mu_{\omega,n} $ converge in distribution to the random spectral measure $\mu_\omega$  in the sense that for all $ f \in C_c(\R) $:
\be \label{eq:cond1}
 \mu_{n}(f) := \int f(x)  \mu_{n}(dx) \stackrel{\mathcal{D}}{\to}  \mu(f) 
 \ee
\item there exists $ \Gamma \in \C^+ $ such that for all $ \varepsilon > 0 $: 
\begin{align}
\lim_{\eta \to \infty} \mathbb{P}(   \left| F(i\eta) - \Gamma\right| \geq \varepsilon ) & = 0 \, .  \label{eq:ass2}  \\ 
\lim_{\eta \to \infty} \limsup_{n \to \infty} \mathbb{P}(   \left| F_n(i\eta) - \Gamma\right| \geq \varepsilon ) \ & = 0   \, ,  \label{eq:ass1}  
\end{align}  

\end{enumerate}
\end{theorem}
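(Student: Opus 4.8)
The goal is to upgrade the two hypotheses — distributional convergence of the spectral measures $\mu_{\omega,n}$ to $\mu_\omega$ together with the two-tier control \eqref{eq:ass2}--\eqref{eq:ass1} of the single-site value $F(i\eta)$ — to the statement that the pairs $(\mu_{\omega,n}, F_{\omega,n}(i))$ converge in distribution on $\widetilde\Omega = \mathcal M(\R)\times\overline{\mathbb C^+}$. Since the topology on $\widetilde\Omega$ is the product of the vague topology on measures and the usual topology on $\overline{\mathbb C^+}$, and since by Theorem~\ref{prop:topology} vague convergence of the $\mu_n$ plus convergence of the single value $F_n(i)$ together force convergence of $F_n(w)$ at every point, the real work is to show joint tightness and to identify the limiting value $F_\omega(i)$ from $\mu_\omega$ and $\Gamma$. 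The plan is to run a subsequence argument: by the assumed convergence \eqref{eq:cond1} and Prokhorov's theorem the laws of $\mu_{\omega,n}$ are tight on $\mathcal M(\R)$; it then suffices to show that along any subsequence on which $(\mu_{\omega,n}, F_{\omega,n}(i))$ converges in law — such subsequences exist once we also establish tightness of $F_{\omega,n}(i)$ in $\overline{\mathbb C^+}$ — the limit is necessarily the law of $(\mu_\omega, F_\omega(i))$, which pins the whole sequence.

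First I would dispose of tightness of $F_{\omega,n}(i)$. Because each $F_{\omega,n}$ is HP, the pointwise bound \eqref{eq:Gbound} (equivalently its $\C^+$-version) controls $|F_{\omega,n}(i) - F_{\omega,n}(i\eta)|$ in terms of $\eta$ and the total mass $\sigma_n(S) = \Im F_{\omega,n}(i)$; more directly, for fixed $\eta$ one has the deterministic HP estimate $|F_{\omega,n}(i)|\le \eta\,\Im F_{\omega,n}(i) + |F_{\omega,n}(i\eta)|/(\text{stuff})$ type of inequality, and $\Im F_{\omega,n}(i)$ is itself tight because $\mu_{\omega,n}$ is tight and $\Im F_{\omega,n}(i)\ge \int (x^2+1)^{-1}\mu_{\omega,n}(dx)$ with an upper bound that can fail only if mass escapes, which \eqref{eq:ass1} at large $\eta$ rules out since $\Im F_n(i\eta)\ge \eta^{-1}\!\int$ near-resonant mass. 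Combining the large-$\eta$ control \eqref{eq:ass1} with the monotone interpolation inequalities of the HP class gives tightness of $F_{\omega,n}(i)$. Hence the pairs $(\mu_{\omega,n},F_{\omega,n}(i))$ are tight on $\widetilde\Omega$, and every subsequence has a weakly convergent sub-subsequence, say to a law of a pair $(\widehat\mu_\omega,\widehat\beta_\omega)$; by the first marginal and \eqref{eq:cond1}, $\widehat\mu_\omega$ has the law of $\mu_\omega$, so the limiting object is a random HP function $\widehat F_\omega$ with spectral measure distributed as $\mu_\omega$.

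It remains to identify $\widehat\beta_\omega = \widehat F_\omega(i)$ with $F_\omega(i)$ in distribution. Here is where \eqref{eq:ass2} and \eqref{eq:ass1} do their job. On the one hand, the limiting HP function $\widehat F_\omega$ inherits from \eqref{eq:ass1}, via the Portmanteau theorem applied to the continuous-on-$\C^+$ functional $F\mapsto F(i\eta)$ (continuity is exactly the content of Theorem~\ref{prop:topology}), the property $\lim_{\eta\to\infty}\mathbb P(|\widehat F(i\eta)-\Gamma|\ge\varepsilon)=0$. On the other hand, by Theorem~\ref{prop:topology} the map $F\mapsto F(i\eta)$ is continuous, so $\widehat F(i\eta)$ has the distributional limit, as $n\to\infty$ through the subsequence, of $F_n(i\eta)$; feeding this through the iterated-limit hypothesis \eqref{eq:ass1} and then letting $\eta\to\infty$ shows $\widehat F(i\eta)\to\Gamma$ in probability. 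But the same argument applied to the target sequence — i.e. to $F_\omega$ itself via \eqref{eq:ass2} — shows $F_\omega(i\eta)\to\Gamma$ in probability as well. Since both $\widehat F_\omega$ and $F_\omega$ are HP functions whose spectral measures have the same law as $\mu_\omega$ and whose values at $i\eta$ converge in probability to the same deterministic $\Gamma$, the representation $F(i) = F(i\eta) - \int\big[(x-i)^{-1}-(x-i\eta)^{-1}\big]\mu(dx) - (\eta-1)\,a$ together with $a=0$ (Theorem~\ref{lem:Aconst}, applicable once we know the limit is stationary — or, absent stationarity, directly from boundedness considerations forced by \eqref{eq:ass1}) expresses $F(i)$ as a measurable function of $(\mu,\Gamma)$ in the $\eta\to\infty$ limit, identically for both; hence $\widehat F_\omega(i)$ and $F_\omega(i)$ have the same joint law with the spectral measure, i.e. $(\widehat\mu_\omega,\widehat\beta_\omega)\stackrel{\mathcal D}{=}(\mu_\omega,F_\omega(i))$. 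The limit being independent of the subsequence, the full sequence converges.

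The main obstacle I anticipate is the interchange of the $n\to\infty$ and $\eta\to\infty$ limits, which is precisely why the hypothesis is stated as the iterated limit \eqref{eq:ass1} rather than a single joint limit: one must be careful that, along the extracting subsequence, the limiting HP function's value at a \emph{fixed} $\eta$ is genuinely the distributional limit of $F_n(i\eta)$ — this needs the continuity provided by Theorem~\ref{prop:topology}, plus uniform integrability / tightness to promote weak convergence of the pairs to weak convergence of the images under $F\mapsto F(i\eta)$ — and then that the resulting family, indexed by $\eta$, of limiting random variables converges to $\Gamma$; the double-limit bookkeeping, and checking that the representation of $F(i)$ in terms of $(\mu,F(i\eta))$ passes to the limit without an anomalous contribution (the $a$-term, handled by Theorem~\ref{lem:Aconst}), are the delicate points. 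Everything else is a routine Prokhorov-plus-subsequence argument combined with the already-established topological equivalences.
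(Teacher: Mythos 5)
Your strategy is sound and rests on the same two pillars as the argument given in the text: the exact decomposition $F_{\omega,n}(i)=F_{\omega,n}(i\eta)+\int\bigl[(x-i)^{-1}-(x-i\eta)^{-1}\bigr]\mu_{\omega,n}(dx)$ of \eqref{eq:split}, with \eqref{eq:ass1}--\eqref{eq:ass2} used to replace both $F_{\omega,n}(i\eta)$ and $F_\omega(i\eta)$ by the common constant $\Gamma$, and \eqref{eq:cond1} used to pass the $\mu$-integral to the limit. The packaging differs. The proof in the text is a direct approximation in distribution: for each fixed $\eta$ it establishes \eqref{eq:firststep} by truncating $g_\eta$ at $|x|\le W=\eta^{1+\alpha}$ and bounding the tail by $(2\eta/W)\,\Im F_{\omega,n}(iW)$ --- note that \eqref{eq:ass1} is invoked a second time here, precisely because $g_\eta\notin C_c(\R)$ so hypothesis \eqref{eq:cond1} alone does not cover the integral --- and then lets $\eta\to\infty$. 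You instead run Prokhorov: tightness of the pairs, subsequential limits, and identification of the limit law. Your route buys a cleaner separation between compactness and identification, and it lets the continuity of $(\mu,F(i))\mapsto F(i\eta)$ from Theorem~\ref{prop:topology} absorb the tail estimate that the text does by hand; the price is the extra step of arguing that $F(i)-\Gamma$ equals $\lim_{\eta\to\infty}\int g_\eta\,d\mu$ in probability and is therefore a fixed $\sigma(\mu)$-measurable functional, common to the subsequential limit $\widehat F$ and to $F$, which is what transfers the identification across the equality in law of the spectral measures.

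Two spots need repair before your plan is a proof. First, the tightness inequality as written ("$|F_{\omega,n}(i)|\le \eta\,\Im F_{\omega,n}(i)+|F_{\omega,n}(i\eta)|/(\text{stuff})$") is not an inequality; what you want is, e.g., $\Im F_{\omega,n}(i)\le\eta\,\Im F_{\omega,n}(i\eta)$ for $\eta\ge1$ (from $\eta/(x^2+\eta^2)\ge 1/[\eta(x^2+1)]$ and $a_n\ge0$), together with $|F_{\omega,n}(i)-F_{\omega,n}(i\eta)|\le\eta\,\Im F_{\omega,n}(i)$; combined with \eqref{eq:ass1} at a single large fixed $\eta$ this yields tightness of $F_{\omega,n}(i)$. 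Second, Theorem~\ref{lem:Aconst} is not applicable here since no stationarity is assumed; your fallback is the right one --- $\Im\widehat F(i\eta)\ge\widehat a\,\eta$ forces $\widehat a=0$ almost surely once $\widehat F(i\eta)\to\Gamma$ in probability --- and the same observation is needed for $F_\omega$ itself via \eqref{eq:ass2}. With these points made precise, your argument closes.
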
 

\begin{proof}  
We will write
\begin{align}\label{eq:split} 
 F_{\omega,n}(i) =  F_{\omega,n}(i\eta) +  \int \left( \frac{1}{x-i} - \frac{1}{x-i\eta} \right) \mu_{\omega,n}(dx)  =:  F_{\omega,n}(i\eta)  + \int g_\eta(x)  \mu_{\omega,n}(dx) \, . 
\end{align}
In a first step we establish that distributional convergence of the pair
\be\label{eq:firststep}
\left(  \mu_{\omega,n}\, ,  \int g_\eta(x)  \mu_{\omega,n}(dx) \right)  \stackrel{\mathcal{D}}{\to} \left(  \mu_{\omega} \, ,  \int g_\eta(x)  \mu_{\omega}(dx)\right)  
\ee
for all $ \eta \in [1,\infty ) $. To do so, we  split the integral into two parts
by inserting a smooth indicator function $ \chi_W \in C_c(\R) $ of the interval $ |x | \leq W $ with the property that 
$\chi_W^c(x) := 1 -  \chi_\eta(x) = 0 $ for all $ |x | \leq W $.  The pair $ \left(  \mu_{\omega,n} \, , \int g_\eta(x) \chi_W(x)   \mu_{\omega,n}(dx) \right) $ converges in distribution by assumption. 
The contribution to the integral from $ |x| \geq W $ is bounded:
\be\label{eq:boundIm}
\left| \int \chi_W^c(x) g_\eta(x) \mu_{\omega,n}(dx)\right| \leq \eta \int_{|x|\geq W} \frac{ \mu_{\omega,n}(dx)}{\sqrt{x^2+1}\sqrt{x^2+\eta^2}}  \leq \frac{2\eta}{W} \Im F_{\omega,n}(iW)\, . 
\ee
Choosing $ W = \eta^{1+\alpha} $ with some $ \alpha > 0 $, assumption~\eqref{eq:ass1} ensures that for any $ \varepsilon > 0 $:
\be
\lim_{\eta \to \infty} \limsup_{n\to \infty} \mathbb{P}\left( \left| \int \chi_{W_\eta}^c(x) g_\eta(x) \mu_{\omega,n}(dx)\right| \geq \varepsilon \right) = 0 \, .
\ee
This establishes~\eqref{eq:firststep}.  

The second assumption allows to convert \eqref{eq:firststep} to the statement  that the pair $( \mu_{\omega,n} , F_{\omega,n}(i) ) $ is asymptotically close (in distribution) to  $( \mu_{\omega}, F_{\omega}(i)  ) $, since the extra terms $F_{\omega,n}(i\eta) $ and $F_{\omega}(i\eta) $ are asymptotic (in probability) to the same constant $\Gamma$. 

This finishes the proof of the distributional convergence in the sense discussed in Section~\ref{sec:pointwise_conv}.   
\end{proof}

\subsection{Convergence of the distribution of the boundary values} 

To follow up on the  convergence criterion of Theorem~\ref{thm:suffcrit}, more needs to be said to address the convergence of the distribution of the random function along the boundary $\R $.   Following are some useful criteria, which will allow to apply the analysis of this paper to a number of cases of interest. \\  
 
\begin{theorem} \label{thm:conv_along_R}
 Let $ F_{\omega,n} $ be a sequence   of random HP functions  which converges in distribution, to a random function $ F_{\omega}$ (the sense discussed in Section~\ref{sec:pointwise_conv}), and suppose that in an interval $[a,b]\subset \R$    the spectral measures of  both $ F_{\omega,n}$ and $ F_{\omega}$  consist only of simple point processes.  Then also 
\be \label{eq:conv_at_R1}
F_{\omega,n}(x+i0)  \stackrel{\mathcal D}{\rightarrow} F_\omega(x+i0) \,  
 \ee 
 for any $x\in (a,b)$ for which  
 \be \label{eq:nocharge} 
 \Ev{\mu(\{x\})}=0  \, . 
   \ee 
 \end{theorem}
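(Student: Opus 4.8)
The plan is to reduce convergence of the boundary values $F_{\omega,n}(x+i0)$ to convergence of the resolvent at a point $x+i\eta$ slightly off the axis, combined with a deterministic representation of the boundary value as a limit of the off-axis values that is uniform enough to survive the $n\to\infty$ limit. The key structural fact I would exploit is the one already used in the proof of Theorem~\ref{thm:Cauchy}: when the spectral measure is purely singular in a neighbourhood of $x$, the boundary value $F_{0}(x)$ is real, and $F(x+i\eta)$ converges to it as $\eta\downarrow 0$ for a.e.\ $x$. But here I need more than a.e.\ convergence in $x$ for a fixed realization — I need it to interchange with the distributional limit in $\omega$ and $n$.

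First I would fix $x\in(a,b)$ with $\Ev{\mu(\{x\})}=0$ and choose a small $\delta>0$ so that $[x-\delta,x+\delta]\subset(a,b)$. On this interval both $\mu_{\omega,n}$ and $\mu_\omega$ are simple point processes, so for a.e.\ $\omega$ the function $F_{\omega,n}$ (resp.\ $F_\omega$) is meromorphic in a neighbourhood of $x$ with only simple real poles, and $F_{\omega,n}(x+i0)$ is real and finite precisely when $x$ is not a pole. The condition $\Ev{\mu(\{x\})}=0$ is exactly what guarantees that, with probability one, $x$ is not an atom of $\mu_\omega$; and I would want a similar statement uniformly in $n$, which should follow from $\eqref{eq:cond1}$ applied to a sequence of bump functions concentrating at $x$, together with $\eqref{eq:nocharge}$, giving $\limsup_n \Pr{\mu_n(\{x\})>0}=0$. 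Second, I would write, for each small $\eta>0$ and on the event that $x$ carries no mass,
\be
F_{\omega,n}(x+i\eta) - F_{\omega,n}(x+i0) = i\eta \int \frac{\mu_{\omega,n}(dx')}{(x'-x)\big((x'-x)+i\eta\big)}\,,
\ee
and split the integral into the contribution from $|x'-x|\le\delta$ (a finite sum over the simple poles near $x$, which is $O(\eta)$ on the no-atom event, with a constant controlled by the nearest gap) and the contribution from $|x'-x|>\delta$ (bounded by $\eta\,\delta^{-2}\,\mu_{\omega,n}(\R\setminus[x-\delta,x+\delta])$, whose tail can be controlled exactly as in $\eqref{eq:boundIm}$ via $\Im F_{\omega,n}(iW)$ and assumption-type bounds already available from the hypothesis that $F_{\omega,n}$ converges in distribution, hence is tight). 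This yields that $F_{\omega,n}(x+i\eta)\to F_{\omega,n}(x+i0)$ in probability as $\eta\downarrow 0$, uniformly in $n$ in the sense that $\lim_{\eta\downarrow 0}\limsup_{n\to\infty}\Pr{|F_{\omega,n}(x+i\eta)-F_{\omega,n}(x+i0)|>\varepsilon}=0$, and the same for $F_\omega$.

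Third, for fixed $\eta>0$ the map $F\mapsto F(x+i\eta)$ is continuous in the topology of pointwise convergence on $\C^+$ (Theorem~\ref{prop:topology}), so the assumed distributional convergence $\left(\mu_{\omega,n},F_{\omega,n}(i)\right)\stackrel{\mathcal D}{\to}\left(\mu_\omega,F_\omega(i)\right)$ upgrades to $\left(\mu_{\omega,n},F_{\omega,n}(x+i\eta)\right)\stackrel{\mathcal D}{\to}\left(\mu_\omega,F_\omega(x+i\eta)\right)$ jointly. Combining this with the two uniform-in-$n$ approximation statements from the previous step via a standard $3\varepsilon$/Slutsky-type argument (approximate $F_{\omega,n}(x+i0)$ by $F_{\omega,n}(x+i\eta)$ up to $\varepsilon$ in probability uniformly in large $n$, pass to the limit at fixed $\eta$, then send $\eta\downarrow 0$) gives $\eqref{eq:conv_at_R1}$. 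I expect the main obstacle to be the uniform-in-$n$ control of the near-diagonal term $\sum_{|x'-x|\le\delta}$: one must argue that, with high probability uniformly in $n$, the nearest pole of $\mu_{\omega,n}$ to $x$ stays a definite distance away and the residues do not blow up — i.e.\ a no-small-gap / no-large-weight estimate at the fixed point $x$. This is where the hypothesis that the limiting process is a \emph{simple} point process, plus $\eqref{eq:nocharge}$ and the distributional convergence $\eqref{eq:cond1}$ (tested against bumps at $x$ and against $x'\mapsto (x'-x)^{-2}\chi_\delta$, which is continuous and compactly supported once we are on the no-atom event), must be leveraged; alternatively one can absorb this into the tightness of $F_{\omega,n}(x+i\eta)$ itself, using that $\Im F_{\omega,n}(x+i\eta)$ controls $\int (x'-x)^{-2}\wedge \eta^{-2}\,\mu_{\omega,n}(dx')$ and hence bounds the number and weight of nearby poles.
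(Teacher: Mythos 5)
Your argument is correct in outline but takes a genuinely different route from the paper's proof. The paper never interchanges the limits $\eta\downarrow 0$ and $n\to\infty$: it splits $F=F^{(1)}+F^{(2)}$ into a near part (spectral measure cut off to a neighbourhood of $[a,b]$) and a far part, handles the far part by Lemma~\ref{lem:hole} (analyticity across the spectral gap plus the uniform bound \eqref{eq:Gbound}), and handles the near part by regarding $F^{(1)}_{\omega,n}(x+i0)$ directly as a functional on the configuration space $\cup_k [-1,1]^k$ of the simple point process; that functional is continuous except at configurations charging $x$, the hypothesis \eqref{eq:nocharge} makes this exceptional set null for the limit law, and the continuous mapping theorem finishes, together with the joint convergence \eqref{eq:joint} of the two components. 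You instead approximate $F_{\omega,n}(x+i0)$ by $F_{\omega,n}(x+i\eta)$ and close with a Slutsky/triangular-array argument, which hinges on the uniform estimate $\lim_{\eta\downarrow 0}\limsup_n \Pr{|F_{\omega,n}(x+i\eta)-F_{\omega,n}(x+i0)|>\varepsilon}=0$. The step you flag but do not carry out --- that uniformly in large $n$, with probability $1-o(1)$, no point of $\mu_{\omega,n}$ lies within distance $d$ of $x$ --- is indeed the crux, and it does follow from the ingredients you name: for a continuous bump $g_d$ equal to $1$ on $[x-d,x+d]$ and supported in $[x-2d,x+2d]$, simplicity (unit masses) gives $\Pr{\mu_{n}([x-d,x+d])>0}\le \Pr{\mu_n(g_d)\ge 1/2}$, the Portmanteau theorem gives $\limsup_n\Pr{\mu_n(g_d)\ge 1/2}\le \Pr{\mu([x-2d,x+2d])\ge 1/2}$, and the latter tends to $\Pr{\mu(\{x\})\ge 1/2}=0$ as $d\downarrow 0$ by \eqref{eq:nocharge}. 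Once the nearest point is at distance $\ge d$, the near contribution is at most $\eta\, d^{-2}\mu_{\omega,n}([x-\delta,x+\delta])$ (tight by the assumed convergence of the spectral measures) and the far contribution is at most $\eta\, C_{x,\delta}\, \Im F_{\omega,n}(i)$ (tight since $F_{\omega,n}(i)$ converges in law), so no separate control of residues or of gaps between nearby points is actually needed --- your second suggested route via $\Im F_{\omega,n}(x+i\eta)$ is superfluous. Your approach is more quantitative and self-contained (it does not need Lemma~\ref{lem:hole} or the configuration-space topology), at the price of this extra tightness bookkeeping; the paper's approach is softer and localizes the role of \eqref{eq:nocharge} to a single application of the continuous mapping theorem. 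Two minor corrections: the kernel in your difference formula should read $i\eta/[(x'-x-i\eta)(x'-x)]$, and a possible linear term $a_n z$ in the Herglotz representation contributes $i a_n \eta$ to the difference, harmless since $a_n\le \Im F_{\omega,n}(i)$ is tight.
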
  
 
 \noindent {\bf Remark:}  There is a reason here for the restriction on the nature of the spectra:  \eqref{eq:conv_at_R1} fails when the spectral measures of $F_{\omega,n}$ are discrete but converge to an absolutely continuous measure.  In such case there will be a positive measure set of $x\in \R$ at which $\Im F_\omega(x+i0) >0 $, while $\Im F_{\omega,n}(x+i0) =0 $ for all $n<\infty$. \\

\begin{proof} [Proof of Theorem~\ref{thm:conv_along_R}]
 In  the proof we  decompose each HP function $F_n(z)$ to a sum of two components, one due to  the {\it near} part of the spectral measure $\mu_F(du)$ and the other due to its {\it far} part.      
The distributional continuity  \eqref{eq:conv_at_R1}  of  the first component is where the limitation to discrete spectra is being used.  
This condition however  places the statement within the reach of standard continuity arguments.   The second contribution  is continuous by  Lemma~\ref{lem:hole}.  In addition to the separate continuity statements one needs  to notice that we have here a {\it  joint}  distributional convergence of the two components. \\ 
 
 Due to the freedom to shift and scale the result, it suffices to prove the assertions for the case $[a,b] = [-1,1]$, and sites 
 $x\in [-1/2,1/2]$.  Focusing on that case, let  $\chi : \R \mapsto [0,1]$ be the interpolated projection onto $[-1,1]$: 
\be
 \chi(x) \=  
 \begin{cases}  
     1 \quad  & \mbox{for  $|x|<1$} \\  
     1-2(|x|-1) & \mbox{ if $1<|x|<1.5$}\\ 
     0 & \mbox{for $|x|> 1.5$} 
\end{cases} 
\ee 
Using it, for each measure $\mu \in \mathcal M(\R)$ we denote its ``near'' and ``far'' parts as: 
\be 
\mu^{(1)}  (dx) \  := \  \chi(x) \, \mu   (dx)\, , \quad \mbox{and 
$\mu^{(2)}  (dx) \  := \  [1-\chi(x) ]\, \mu  (dx)$}  \, . 
\ee 
Correspondingly, we decompose any HP function $F(z)$ into: 
\be 
F(z) \= F^{(1)}(z)  \+ F^{(2)}(z)
\ee 
breaking the spectral representation  \eqref{eq:F} into: 
\begin{eqnarray}  
F^{(1)}(z)  &=&   \int_{[-1.5,1.5]} \left[ \frac{1}{u-z} - \frac{u}{u^2+1} \right] \, \chi(x) \, \mu^{(1)}_F(du) \notag \\ 
\mbox{ } \\ 
F^{(2)}(z)  &=&   b + a z + \int_{\R\backslash [-1,1]} \left[ \frac{1}{u-z} - \frac{u}{u^2+1} \right] \, [1-\chi(x)] \, \mu^{(1)}_F(du) \, .   \notag \\ 
\end{eqnarray}

It is easy to see that the assumed convergence in distribution of $F{\omega,n}$ implies the joint convergence of their two components as a pair of HP functions, in the natural extension of this notion to pairs of functions:  
 \be \label{eq:joint} 
\left(F^{(1)} _{\omega,n}, F^{(2)} _{\omega, n} \right )   \stackrel{\mathcal D}{\longrightarrow} \left( F^{(1)} _{\omega}, F^{(2)} _{\omega} \right )  \, . 
 \ee  
(In essence: the corresponding spectral measures converge for each value of $j$, and since $F^{(1)}$ falls off at infinity  the second requirement for convergence is of relevance only for  $j=2$.) \\

The assumed structure of  the spectral measure of $F^{(1)}_{\omega,n}(z) $  within $[-1,1]$ means that for each $n$ these random measures corresponds to a random  probability distribution  on the disjoint union of compact sets $Y \ := \ \cup_{k=0,1,2,...} [-1,1]^k$, the  point of each we shall denote by   $y_k=(y_{k,j})_{j=1}^k$, with  measures $\nu_{\omega,n}(d y_k)$ (on $k$ labeled particles) which are symmetric under permutations.    In particular, 
the probability that there are $k$  particles in $[-1,1]^k$ is   
\be 
p_n(k) \ := \ \int_{[-1,1]^k}   \nu_{\omega,n} (d\, y_k)/k!
\ee 
In this notation:  
\be \label{eq:local_F} 
F^{(1)}_{\omega,n}(z) \=   \sum_{k=1}^\infty \int_{[-1,1]^k}  \chi(y_j) \left[ \frac{1}{z-y_j} - \frac{y_j}{y_j^2+1} \right] \, \nu_{\omega,n} (d\, y_k)/k!\,.      \\ 
\ee 

In the natural topology on $Y$, the number of particles in $[-1,1]$ may change discontinuously due the appearance or disappearance of a particle at the boundary of the set.  Otherwise, the configuration depends continuously on the position of the particles in  $[-1,1]$.  Thus functions of the form 
\be \sum_{k=1}^\infty \int_{[-1,1]^k}  \phi(y_j)  \, \chi(y_j)  \nu_{\omega,n} (d\, y_k)/k! 
\ee 
with $\phi \in C([-1,1]) $ whose supported lies in $(-1,1)$   are continuous.   \\

Under the assumption of convergence in distribution of the random spectral measures, the sequence of probability measures $p_n$ on $\N$ is tight, and the integrals of  functions which are continuous in $[-1,1]^k$ and vanish at the boundary  have distribution which converges to that of the limiting measure.  
By the  {\it continuous mapping theorem}, this extends to functions which are continuous on a complement of a set which is not charged by the limiting  measure.   
In the representation \eqref{eq:local_F} of $F_n(x+i0) $ 
for a given $x$ the  integrand is a continuous function of the configuration except at configurations with a particle at $x$.    Thus,  the assumed convergence of the spectral measure allows to deduce the continuity of the probability distribution of $F^{(1)}_{\omega,n}(x+i0)$ for sites $x\in [-1/2,1/2]$ at which \eqref{eq:nocharge} holds.   \\

 The probability distribution of  $F^{(2)}_{\omega,n}(x+i0)$ is continuous in the limit $n \to \infty$ by an application of Lemma~\ref{lem:hole}.    Furthermore, combined with \eqref{eq:joint}, the  arguments imply that the joint distribution of the pair of random variables  
 $\left( F^{(1)}_{\omega,n}(x+i0), F^{(2)}_{\omega,n}(x)  \right)$ is continuous in the limit, and hence  \eqref{eq:conv_at_R1} holds.  \\ 
 \end{proof}  
 
Theorem~\ref{thm:conv_along_R} has implications for the random matrix models which are discussed next, and for the \v{S}eba process~\cite{Seba90, KMW,BBK,AlSeba}  on which more is said in \cite{ASW}.    Following is another continuity criterion which may be of interest beyond the cases covered by it, in particular when the spectral measures are singular but with dense support and not of uniform masses.  
An example to  keep in mind are the possible scaling limits of the Green functions of random operators in the regime of Anderson localization.  \\

In discussing continuity of HP functions along the line  it is  natural  to regard the range of $F$ as the Riemann sphere $ \overline {\C} $, i.e. the one point compactification of $\C$.   This suggests the following terminology.

\begin{definition} ($*$-continuity)  1. A function $F: \overline {\C} \mapsto \C$ is said (here) to be  $*$continuous at $z$ iff the mapping $z\mapsto \frac{-1}{F(z+i0) + i} $ is continuous at that point.

2.  For a random HP function, we define as its (mean) 
 {\it  modulus of $*$continuity}, at $x\in \R$,  the function 
\be 
  \kappa(x,\delta) \ :=  \  \Ev{ \left | \frac{1}{F(x+\delta+i0)+i} - \frac{1}{F(x+i0)+i} \right| } \,. 
\ee 
(which for almost all $x \in \R$ is defined for almost all $\delta\in\R $.) 
\end{definition} 
 
 \begin{theorem} \label{thm:conv_along_R*}
 If  a sequence   of random HP functions converges in distribution, $ F_{\omega,n}  \stackrel{\mathcal D}{\rightarrow} F_\omega $, and  the moduli of $*$-continuity of $F_{\omega,n} $ and $F_\omega$ are bounded uniformly in $ n $ by  $\kappa(x,\delta)$,  then  for any $x\in \R$ for which: 
 \be \label{eq:kappa_0}
 \lim_{\delta \to 0}  \kappa(x,\delta)  \= 0 
  \ee 
the distributions of the random variables $F_{\omega,n}(x+i0)$ converge:  
 \be \label{eq:conv_at_R}
F_{\omega,n}(x+i0)  \stackrel{\mathcal D}{\rightarrow} F_\omega(x+i0) \, . 
 \ee 
 \end{theorem}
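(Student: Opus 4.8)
The plan is to transport the problem to the bounded analytic function obtained by post-composing $F$ with the M\"obius map $\Psi(w):=-(w+i)^{-1}$. For $w\in\overline{\C^+}$ one has $|w+i|\ge1$, so $\Psi$ (with $\Psi(\infty)=0$) maps $\overline{\C^+}\cup\{\infty\}$ homeomorphically onto a compact subset of the closed unit disk, is analytic on $\C^+$, and satisfies $|\Psi|\le1$. Through this homeomorphism the asserted convergence \eqref{eq:conv_at_R} of the Riemann-sphere valued variables $F_{\omega,n}(x+i0)$ is equivalent to $\Psi(F_{\omega,n}(x+i0))\stackrel{\mathcal D}{\to}\Psi(F_{\omega}(x+i0))$; moreover, by the very definition of $*$-continuity, $\kappa_\omega(x,\delta)=\Ev{|\Psi(F_\omega(x+\delta+i0))-\Psi(F_\omega(x+i0))|}$ and likewise for each $F_{\omega,n}$, so the hypothesis reads $\kappa_n(x,\cdot)\le\kappa(x,\cdot)$ for all $n$, together with $\kappa_\omega(x,\cdot)\le\kappa(x,\cdot)$.

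First I would isolate two ingredients. \emph{(i) Interior convergence.} For each \emph{fixed} $\eta>0$ the evaluation $F\mapsto F(x+i\eta)$ is continuous in the topology of Section~\ref{sec:pointwise_conv} (Theorem~\ref{prop:topology}), so the hypothesised distributional convergence together with the continuous mapping theorem gives $\Psi(F_{\omega,n}(x+i\eta))\stackrel{\mathcal D}{\to}\Psi(F_{\omega}(x+i\eta))$ as $n\to\infty$. \emph{(ii) A uniform boundary estimate.} Since $\Psi\circ F_{\omega,n}$ is bounded and analytic on $\C^+$, the Poisson representation \eqref{eq:useof_g} applies to it realisation-by-realisation; subtracting the $u$-independent constant $\Psi(F_{\omega,n}(x+i0))$, which equals its own Poisson integral, and then taking the expectation of the modulus, Tonelli's theorem and the hypothesis $\kappa_n\le\kappa$ produce a bound that is uniform in $n$ and holds verbatim also for $F_\omega$:
\[
\Ev{\big|\Psi(F_{\omega,n}(x+i\eta))-\Psi(F_{\omega,n}(x+i0))\big|}\ \le\ \int_\R\kappa(x,\delta)\,\frac{\pi^{-1}\eta\,d\delta}{\delta^2+\eta^2}\ =:\ \varepsilon(\eta)\,.
\]
Since $0\le\kappa(x,\delta)\le2$ (again by $|\Psi|\le1$) and $\kappa(x,\delta)\to0$ as $\delta\to0$ by \eqref{eq:kappa_0}, splitting this integral at $|\delta|=\rho$ bounds its inner part by $\sup_{|\delta|<\rho}\kappa(x,\delta)$ and its outer part by a quantity that tends to $0$ as $\eta\downarrow0$ for each fixed $\rho$; letting then $\rho\downarrow0$ gives $\varepsilon(\eta)\to0$ as $\eta\downarrow0$.

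With these in hand the conclusion follows from a three-scale estimate. For any bounded Lipschitz function $\phi$ on $\C$, the triangle inequality and ingredient (ii) give
\[
\big|\Ev{\phi(\Psi(F_{\omega,n}(x+i0)))}-\Ev{\phi(\Psi(F_{\omega}(x+i0)))}\big|\ \le\ 2\,\mathrm{Lip}(\phi)\,\varepsilon(\eta)\ +\ \big|\Ev{\phi(\Psi(F_{\omega,n}(x+i\eta)))}-\Ev{\phi(\Psi(F_{\omega}(x+i\eta)))}\big|\,.
\]
Given $\varepsilon>0$, I would fix $\eta$ with $2\,\mathrm{Lip}(\phi)\,\varepsilon(\eta)<\varepsilon$ and then send $n\to\infty$, the last term vanishing by (i); as $\varepsilon$ was arbitrary this proves $\Psi(F_{\omega,n}(x+i0))\stackrel{\mathcal D}{\to}\Psi(F_{\omega}(x+i0))$, hence \eqref{eq:conv_at_R}.

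I expect the real content to sit entirely in ingredient (ii): replacing $F$ by $\Psi\circ F$ turns an HP function into a \emph{bounded analytic} one, making the Poisson formula available, and the uniform bound $\kappa_n\le\kappa$ on the modulus of $*$-continuity is exactly what renders the boundary limit $\eta\downarrow0$ uniform in $n$ --- the single step where reversing the order of limits (first $n\to\infty$, then $\eta\downarrow0$) would fail. The remaining work is measure-theoretic bookkeeping: one must know that $(\omega,\delta)\mapsto\Psi(F_{\omega,n}(x+\delta+i0))$ is jointly measurable so that Tonelli applies (this follows from Proposition~\ref{thm:dlVP} and $F(x+i0)=\lim_{\eta\downarrow0}F(x+i\eta)$), and that the chosen $x$ is one at which $\kappa(x,\cdot)$ is a genuine function of $\delta$ near the origin --- which is precisely what hypothesis \eqref{eq:kappa_0} guarantees.
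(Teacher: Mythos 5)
Your proposal is correct and follows essentially the same route as the paper's proof: both pass to the bounded analytic function $w\mapsto (F(w)+i)^{-1}$ (your $\Psi\circ F$), use its Poisson representation to bound the $L^1$-distance between the boundary value and its value at height $\eta$ by the Poisson average of $\kappa(x,\cdot)$, and conclude by a three-step comparison with the interior convergence supplied by the distributional hypothesis. You merely spell out two details the paper leaves implicit (the splitting argument showing the Poisson average of $\kappa$ tends to zero, and the Tonelli/measurability point), which is fine.
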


 \begin{proof} 
 We already know, under the theorem's first assumption, that for each $x\in R$ and $\delta >0$: 
  \be
F_{\omega,n}(x+i\delta)  \stackrel{\mathcal D}{\rightarrow} F_\omega(x+i\delta) \, . 
 \ee 
To related this to the values at $\delta=0$, we note that by the Cauchy integral formula, for each $\delta >0$: 
 \be 
 \frac{1}{\pi} \int  \frac{1}{F_{\omega,n}(u+i0) +i} \; \frac{\delta \, du}{(x-u)^2+\delta^2}  \=  F_{\omega,n}(x+i\delta)  \, , 
 \ee 
 with similar relation holding for the limiting function $F_\omega$.   
 The difference can by estimated  in the $L^1$-sense by: 
 \begin{multline} \Ev{ \left|  \frac{1}{F(x+i0) +i}  \ - \ \frac{1}{\pi} \int  \frac{1}{F(u+i0) +i} \; \frac{\delta \, du}{(x-u)^2+\delta^2} \right|} \\  
  \le  \   \frac{1}{\pi} \int  \kappa(x, u-x) \, \frac{\delta \, du}{(x-u)^2+\delta^2} \  =: \  \hat \kappa(x,\delta)  \qquad 
 \end{multline}   
Under the assumption \eqref{eq:kappa_0} also:  $\hat \kappa(x,\delta) \to 0$ as $ \delta \to 0 $. 
Thus, a standard {\it three step comparison} allows to conclude the distributional convergence $ (F_{\omega,n}(x+i0) + i)^{-1} \stackrel{\mathcal D}{\rightarrow} (F_{\omega}(x+i0) + i)^{-1}  $ and hence the claim \eqref{eq:conv_at_R}. 
\end{proof}

\subsection{Examples from RMT and random operators}  \label{sec:examples}
 
 The above criterion can be verified for  the rescaled trace functions defined in 
\eqref{eq:FTrace}  for random matrices  corresponding to the two  examples which were discussed in Section~\ref{eq:St_examples}, whose spectra are  rather different. \\  

\noindent{\bf GUE and Wigner ensembles}~~The spectra of  $n \times n$ hermitian matrices  with complex Gaussian entries, which form the GUE random Gaussian ensemble, are well known to have for $n\to \infty$  the asymptotic density 
\be
  \varrho_{sc}(E_0) := \pi^{-1} \sqrt{ 1 - (E_0/2)^2}  \,  .  
 \ee
It is also known that the rescaled eigenvalue point process, 
amplified in the vicinity of energy $ |E_0| < 2 $, 
\be
 \mu_{\omega,n} = \sum_j \delta_{n  \,  \varrho_{sc}(E_0)  \,  (E_{j,n}(\omega) -E_0 ) } \, \,  , 
 \ee 
 converges in distribution to  the ``sine-kernel process'', which is 
a shift invariant determinantal point process $ \mu_\omega $ of kernel $ K(x,y) = \sin(  \pi (x-y)) / (\pi (x-y)) $   (cf.~\cite{Cupbook}).    \\

 In celebrated works \cite[Thm.~1.3]{EY12}, \cite[Thm.~5]{TV11} the above statement was recently generalized to the broader  class of  Wigner matrices, which are random hermitian $ n \times n $  matrices whose entries
  $ \{h_{jj} $,  $ \{ \Re h_{jk} \}_{j<k}$,  and $\{ \Im h_{jk}  \}_{j<k}$  are  independent, centered and of variance $ 1 /2 $.  
The quoted results imply that in the above case the rescaled trace function (cf.~\eqref{eq:trace})
\be\label{eq:rescaledtrace}
F_{\omega,n}(z) := \int \frac{ \mu_{\omega,n}(dx)}{x-z} =  \frac{1}{\varrho_{sc}(E_0) } \; R_{\omega,n}\left(  {E_0 +} \frac{z}{n \, \varrho_{sc}(E_0) }\right) \, , 
\ee
satisfies the first condition of Theorem~\ref{thm:suffcrit}, i.e. \eqref{eq:cond1} holds. \\

Of the criterion's second condition,
 \eqref{eq:ass2}, holds for the shifted random sine-kernel Stieltjes function $ F^{GUE}_\omega(z) + \Re \Gamma $ (cf.~\eqref{eq:GUE} and Subsection~\ref{sec:Stieljes}) with: 
\begin{equation}\label{def:GammaGUE}
 \Gamma :=   \frac{1}{\varrho_{sc}(E_0) } \int \frac{\varrho_{sc}(v)\, dv}{ v- E_0 - i0}  = -\frac{E_0}{2\varrho_{sc}(E_0)} + i \pi  \, . 
\end{equation}
The assertion that~\eqref{eq:ass1} holds also in the generality of Wigner matrices, of distributions with subgaussian tails, is implied by the statement derived in~\cite[Theorem~3.1]{ESY08} that at this generality, for all small enough $\varepsilon > 0 $: 
\be
\lim_{\eta \to \infty} \limsup_{n\to \infty}\mathbb{P}\left( \left| F_{\omega,n}(i\eta)  -\Gamma \right| \geq  \varepsilon \right) \leq \lim_{\eta \to \infty} C e^{- c\varepsilon\sqrt{\eta}} = 0 \,  
\ee
at  some $ c, C < \infty$ (while this suffices for our purpose, an improved bound was recently presented in~\cite{CMS13}).  \\

Combining these statements with the general criterion provided by Theorem~\ref{thm:conv_along_R}, 
one gets\footnote{L. Erd\"os and A. Knowles also noted  that  such conclusion 
may be drawn from our Theorem~\ref{thm:Cauchy} combined with previous RMT analysis, basing their argument on the more recent  results of~\cite{CMS13} followed by some additional analysis.}:
\begin{corollary} \label{cor:RMT}   
For Wigner matrices $H_{\omega,n} $ whose entries have  a common subgaussian distribution $ \nu $, i.e., $ \int e^{\delta x^2} \nu(dx) < \infty $ for some $ \delta > 0 $, the rescaled trace $ F_{\omega,n}(x) $, defined by~\eqref{eq:rescaledtrace},  converges  in distribution, for  $ n \to \infty $ and any fixed $x$,  to a Cauchy random variable whose analytic baricenter $ \Gamma $ is given by~\eqref{def:GammaGUE}.
\end{corollary}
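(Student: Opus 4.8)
\medskip
\noindent\emph{Proof idea.}
The plan is to deduce the corollary from Theorems~\ref{thm:suffcrit}, \ref{thm:conv_along_R} and~\ref{lem:Aconst}, fed by the two analytic inputs on Wigner matrices quoted above. The first input is bulk universality in the form of~\cite{EY12,TV11}: the rescaled eigenvalue point process $\mu_{\omega,n}$ converges in distribution to the sine-kernel determinantal process $\mu_\omega$ of Subsection~\ref{eq:St_examples}, which is exactly condition~\eqref{eq:cond1} of Theorem~\ref{thm:suffcrit}. The second input is the local semicircle estimate of~\cite{ESY08}, which under the subgaussian hypothesis on $\nu$ gives, for the rescaled trace~\eqref{eq:rescaledtrace} and with $\Gamma$ as in~\eqref{def:GammaGUE}, that $\lim_{\eta\to\infty}\limsup_{n\to\infty}\mathbb{P}(|F_{\omega,n}(i\eta)-\Gamma|\ge\varepsilon)=0$; this is condition~\eqref{eq:ass1}.

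Next I would verify the remaining hypothesis~\eqref{eq:ass2} of Theorem~\ref{thm:suffcrit} for the candidate limit $F_\omega:=F^{GUE}_\omega+\Re\Gamma$, where $F^{GUE}_\omega$ is the sine-kernel Stieltjes function defined in Subsection~\ref{eq:St_examples}. That subsection already checks the hypotheses of Theorem~\ref{thm:Stieltjes} for $\mu_\omega$ with the reference measure $\overbar{\mu}(dx)=dx$, so part~2 of that theorem gives $F^{GUE}_\omega(i\eta)-F_{\overbar{\mu}}(i\eta)\to 0$ almost surely as $\eta\to\infty$; a direct evaluation of the centered truncated Stieltjes transform of Lebesgue measure gives $F_{\overbar{\mu}}(i\eta)=i\pi$, whence $F^{GUE}_\omega(i\eta)\to i\pi$ and therefore $F_\omega(i\eta)\to\Re\Gamma+i\pi=\Gamma$ almost surely, hence also in probability. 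With~\eqref{eq:cond1}, \eqref{eq:ass2} and~\eqref{eq:ass1} in hand, Theorem~\ref{thm:suffcrit} yields $(\mu_{\omega,n},F_{\omega,n}(i))\stackrel{\mathcal D}{\to}(\mu_\omega,F_\omega(i))$ in the topology of Section~\ref{sec:pointwise_conv}.

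To pass from the interior to the boundary I would apply Theorem~\ref{thm:conv_along_R}: on any bounded interval $[a,b]$ the eigenvalue point processes $\mu_{\omega,n}$ and the sine-kernel process $\mu_\omega$ are almost surely simple, and the limiting intensity is Lebesgue measure, so~\eqref{eq:nocharge} holds at every $x\in\R$; consequently $F_{\omega,n}(x+i0)\stackrel{\mathcal D}{\to}F_\omega(x+i0)$ for each fixed $x$. It then remains to identify the law of $F_\omega(x+i0)$. The function $F^{GUE}_\omega$, and hence $F_\omega=F^{GUE}_\omega+\Re\Gamma$, is stationary by the shift-covariance statement in Theorem~\ref{thm:Stieltjes}, which applies since $\overbar{\mu}(dx)=dx$ satisfies~\eqref{eq:massmuoverbar}, and it is ergodic because the sine-kernel determinantal process is. Its spectral measure is purely atomic, so by stationarity and Fubini $\Im F_\omega(x+i0)=0$ almost surely for each fixed $x$. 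Theorem~\ref{lem:Aconst} then shows that $F_\omega(x+i0)$ is a Cauchy random variable whose analytic baricenter has imaginary part $\pi\rho=\pi$ (here $\rho=\mathbb{E}[\mu_F([0,1))]=\int_0^1 K(s,s)\,ds=1$) and, through~\eqref{eq:Gamma2} together with the limit $F_\omega(i\eta)\to\Gamma$ just established, real part as recorded in~\eqref{def:GammaGUE}. Combining this identification with the distributional convergence of $F_{\omega,n}(x+i0)$ gives the corollary.

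The main obstacle is not any single step but importing the random-matrix results in precisely the form the general criteria require: checking that the local-law bound of~\cite{ESY08} survives the $E_0$-dependent rescaling~\eqref{eq:rescaledtrace} with the uniformity demanded by~\eqref{eq:ass1}, and that bulk universality is available in the joint, vague sense~\eqref{eq:cond1} for the entire subgaussian Wigner class (this is where~\cite{EY12,TV11}, and for sharper control~\cite{CMS13}, enter). By contrast the Herglotz--Pick ingredients --- the identity $F^{GUE}_\omega(i\eta)\to i\pi$, the stationarity and ergodicity of $F^{GUE}_\omega$, and the simplicity of the relevant point processes --- are essentially routine given the results of Sections~\ref{sec:Stieljes} and~\ref{sec:rand_HP}.
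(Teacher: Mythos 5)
Your proposal is correct and follows essentially the same route as the paper: conditions~\eqref{eq:cond1}, \eqref{eq:ass1} and~\eqref{eq:ass2} are verified exactly as in Section~\ref{sec:examples} (bulk universality from~\cite{EY12,TV11}, the local law of~\cite{ESY08}, and Theorem~\ref{thm:Stieltjes} applied to the sine-kernel Stieltjes function with reference measure $dx$), after which Theorem~\ref{thm:suffcrit}, Theorem~\ref{thm:conv_along_R} and the Cauchy identification via Theorems~\ref{thm:Cauchy} and~\ref{lem:Aconst} are invoked in the same order. The only difference is that you spell out the computation $F_{\overbar \mu}(i\eta)=i\pi$ and the stationarity/ergodicity check more explicitly than the paper's terse presentation.
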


\medskip 
 
\noindent {\bf Random diagonal matrices.}~~A similar  statement is valid also for the much simpler ensemble of 
  $ n\times n $ random diagonal matrices, whose diagonal entries $(V_j)$ are of a common probability distribution with a smooth density $\rho \in C^1(\R) $.   In this case the rescaled trace function
\be
 F_{\omega,n}( z) : =  \sum_{j=1}^n \frac{1}{ n \, \rho(E_0)  [V_j - E_0] - z} = \frac{1}{\rho(E_0) } \; R_{\omega,n}\left({E_0 +} \frac{z}{n \, \rho(E_0) }\right) \, , 
\ee
with $ E_0$ such that $ \rho(E_0) > 0 $, converges in distribution for any $ z \in \overline{\C^+} $ to the shifted Poisson-Stieltjes function  $ F^{Poi}_\omega(z)  + \Re \Gamma $ with
\be\label{eq:Gamma3}
\Gamma := \frac{ 1}{\rho(E_0)} \int \frac{\rho(v)}{v-E_0- i 0} dv = \frac{1}{\rho(E_0)} \, P.V. \int \frac{\rho(v)\, dv}{v-E_0}   + i \pi \, .  
\ee
In particular, for any $ x \in \R $ the random variables $  F_{\omega,n}( x) $ converge in distribution as $ n \to \infty $ to a Cauchy random variable with baricenter $ \Gamma $ given by~\eqref{eq:Gamma3}.\\ 

Here, the assertion can be easily proven by a direct computation of the characteristic functional $ \mathbb{E}\left[ e^{it F_{\omega,n}( x) } \right] $.   Alternatively, it also follows from Theorems~\ref{thm:suffcrit} and~\ref{thm:conv_along_R} and the fact that $ F^{Poi}_\omega(x) $ has a Cauchy distribution with baricenter $ (i \Im \Gamma)$, cf.~Theorem~\ref{thm:Stieltjes} and~\ref{thm:Cauchy}.

\appendix

\section{Boole's identity for the Stieltjes transform of singular measures} \label{app:Boole}

Following is the proof of  Proposition~\ref{prop:boole},  which we assume is known to experts.  For convenience we restate the result, which extends an identity of Boole \cite{Boole} from the case of pure-point measure $ \mu $ to general singular measures. 

\begin{theorem}   
Let  
\be 
F(z)\= \int \frac{\mu_F(du)}{u-z} 
\ee 
 with the spectral measure $\mu_F$ which is finite and purely singular with respect to the Lebesgue measure $\mathcal L$ (or equivalently:  $\Im F(x+i0) =0$ for a.e. $x\in \R$). Then for any $ t >0 $:
\begin{equation}
	\mathcal L \left( \left\{ x \in \R \, | \,  F(x+i0)  \geq t \right\} \right)  \ = \ \frac{\mu_F(\R)}{t} \, . 
\end{equation}
\end{theorem}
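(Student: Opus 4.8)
The plan is to reduce the identity to a computation of Lebesgue measures of superlevel sets via the layer-cake / distribution-function formalism, exploiting that $F$ is a Herglotz--Pick function with purely singular spectral measure. First I would record the elementary facts about $F(z)=\int \mu_F(du)/(u-z)$: it is HP, so $F(x+i0)$ exists for a.e. $x$ and is real a.e. (the stated equivalence with pure singularity), and moreover on the real line, away from $\supp \mu_F$, $F$ is real-analytic and strictly increasing on each gap (since $F'(x)=\int \mu_F(du)/(u-x)^2>0$). The key structural input is the behaviour of $F$ as $x$ sweeps through the support: between consecutive ``poles'' (or more generally across the singular set), $F$ runs monotonically from $-\infty$ to $+\infty$, so each level $t$ is attained exactly once per branch. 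This is the ``integrability'' remark in the main text — the result is insensitive to how the singular mass is arranged.

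The core step is the identity $\mathcal{L}(\{x : F(x+i0)>t\}) = \lim_{\eta\downarrow 0}\frac{1}{\pi}\int_{\R} \big(\arctan\text{-type Poisson integral}\big)$, which I would realize concretely as follows. For $\eta>0$, $F(x+i\eta)$ maps $\R$ into $\C^+$, and the preimage of the half-plane $\{\Re w > t\}$ under $F(\cdot+i\eta)$ has Lebesgue measure computable from the argument principle / a contour integral of $F'$. Precisely, one writes
\be
\mathcal{L}\big(\{x : \Re F(x+i\eta) > t\}\big) \;=\; \frac{1}{\pi}\int_{\R} \operatorname{Im}\log\!\big(F(x+i\eta)-t\big)'\,dx
\ee
-type expression and recognizes that $\log(F(z)-t)$ is itself (up to a bounded harmonic piece) a Herglotz function, whose boundary imaginary part integrates the measure. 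The cleanest route I would actually follow: note $G_t(z):=-1/(F(z)-t)$ is again HP (composition of HP maps), its spectral measure $\mu_{G_t}$ has total mass $\mu_{G_t}(\R)$ which, read off from the asymptotics $G_t(z)\sim \big(\int\mu_F\big)^{-1}\cdot(-z)^{-1}$... — more directly, I would use the Herglotz representation of $\log(t-F(z))$ (well-defined branch since $t-F$ omits a half-line) to get
\be
\arg\big(t - F(x+i\eta)\big) \;\xrightarrow[\eta\downarrow 0]{}\; \pi\,\mathbbm{1}[F(x+i0)>t] \quad \text{for a.e. } x,
\ee
because $F(x+i0)$ is real a.e., so $t-F(x+i0)$ is real, and the argument jumps to $\pi$ exactly where it is negative. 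Then
\be
\mathcal{L}\big(\{x: F(x+i0)>t\}\big) \;=\; \frac{1}{\pi}\lim_{\eta\downarrow 0}\int_{\R} \arg\big(t-F(x+i\eta)\big)\,dx,
\ee
and the right side is a harmonic-measure quantity: $\arg(t-F(z))$ is the imaginary part of $\log(t-F(z))$, a function analytic and with negative imaginary part on $\C^+$ (since $\Im F>0$ there forces $\Im(t-F)<0$), hence $-\log(t-F(z))$ is HP, and its linear-growth coefficient $a=0$ while the mass of its spectral measure is read from $z\,[-\log(t-F(z))] \to ?$ as $z\to i\infty$. Using $F(i\eta)\to 0$ and $F(i\eta)\sim \mu_F(\R)/(i\eta)\cdot(-1)= i\mu_F(\R)/\eta$... wait — $F(i\eta) = \int \mu_F(du)/(u-i\eta) \to \mu_F(\R)\cdot i/\eta$ divided... let me just say: $\eta\,\Im\big[-\log(t-F(i\eta))\big] \to \mu_F(\R)/t$ by expanding $-\log(t-F)\approx F/t \approx i\mu_F(\R)/(t\eta)\cdot\eta^{-1}$ to leading order, giving the spectral mass $\mu_F(\R)/t$; and for such an HP function (with $a=0$) $\frac1\pi\int \Im(\cdot)(x+i0)\,dx$ equals that mass.

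Assembling: $\frac{1}{\pi}\int_\R \arg(t-F(x+i\eta))\,dx$ equals (for each fixed $\eta$, exactly) the integral against the Poisson kernel at height $\eta$ of the boundary values, which by the above converges to $\mu_F(\R)/t$. Combined with the pointwise a.e. limit $\arg(t-F(x+i\eta))\to \pi\,\mathbbm{1}[F(x+i0)>t]$ and a dominated/bounded convergence argument (the integrand is bounded by $\pi$ and the Poisson mass is finite because the relevant HP function has finite spectral mass), one concludes $\mathcal{L}(\{F(x+i0)>t\})=\mu_F(\R)/t$, and replacing $>$ by $\geq$ changes nothing since $\{F(x+i0)=t\}$ meets $\supp\mu_F^c$ where $F$ is strictly monotone, hence is a null set. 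I expect the \textbf{main obstacle} to be making rigorous the two facts that the a.e.\ boundary limit of $\arg(t-F(x+i\eta))$ is genuinely $\pi\,\mathbbm{1}[F(x+i0)>t]$ (this uses crucially that $\Im F(x+i0)=0$ a.e.\ — on an a.c.\ part it would fail, matching the singularity hypothesis) and that $-\log(t-F(z))$ is a bona fide HP function with vanishing linear term and spectral mass exactly $\mu_F(\R)/t$; the bookkeeping of which branch of $\log$ and the boundary-value exchange of limits ($\eta\downarrow 0$ inside the integral) is where care is needed, but it is routine once the structure is set up.
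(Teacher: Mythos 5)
Your strategy is sound but genuinely different from the one in the paper. You work with $G_t(z):=-\log\bigl(t-F(z)\bigr)$, which is Herglotz--Pick with $0\le \Im G_t\le \pi$, identify its boundary imaginary part as $\pi\,\indfct[F(x+i0)>t]$ a.e.\ (this is exactly where the reality of $F(x+i0)$ enters), and read off the answer from the total spectral mass $\lim_{\eta\to\infty}\eta\,\Im G_t(i\eta)=\mu_F(\R)/t$. This is the classical ``exponential representation'' proof of Boole's identity. The paper instead fixes $\eta$, computes by contour integration the characteristic function $\int e^{i\tau F(x+i0)}\,\frac{\eta}{\pi(x^2+\eta^2)}\,dx = e^{i\tau\Re F(i\eta)-|\tau|\Im F(i\eta)}$, concludes that under the Poisson kernel at height $\eta$ the variable $F(x+i0)$ is exactly Cauchy with parameters $\bigl(\Re F(i\eta),\Im F(i\eta)\bigr)$, and then sends $\eta\to\infty$ using the Cauchy tail $\Pr{X\ge t}\sim \Im F(i\eta)/[\pi(t-\Re F(i\eta))]$ together with $\eta\,\Im F(i\eta)\to\mu_F(\R)$. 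Both proofs rest on the same two inputs (a.e.\ reality of the boundary values and the $\eta\to\infty$ asymptotics of $F(i\eta)$); the paper's version has the virtue of reusing verbatim the characteristic-function computation of Theorem~\ref{thm:Cauchy}, while yours avoids any tail asymptotics by converting the level set directly into a spectral mass.

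Two points need shoring up. First, the step ``$\frac{1}{\pi}\int_\R \Im G_t(x+i0)\,dx$ equals the spectral mass of $G_t$'' is not true for a general HP function with $a=0$: the boundary density $\pi^{-1}\Im G_t(x+i0)$ recovers only the \emph{absolutely continuous} part of the spectral measure. You must invoke the fact that an HP function whose imaginary part is bounded (here by $\pi$) has purely absolutely continuous spectral measure; this also cleanly disposes of the limit interchange you worry about, since then $\mathcal L(\{F>t\})=\int \pi^{-1}\Im G_t(x+i0)\,dx$ is an identity between the density and its integral, with no $\eta\downarrow 0$ limit of integrals over the infinite-measure line required. Second, your argument that $\{F(x+i0)=t\}$ is Lebesgue-null via strict monotonicity off $\supp\mu_F$ does not work in general: a singular measure can have support of full Lebesgue measure. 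The null-set claim instead follows from the identity itself with strict inequality, since $t\mapsto \mu_F(\R)/t$ is continuous (or, as in the paper, one simply runs the whole argument with $\ge$ from the start). With these repairs the proof is complete.
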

\begin{proof}
	The monotone convergence theorem implies that 
	\begin{equation}
		\mathcal L \left(  \left\{ x \in \R \, | \,   F(x+i0)  \geq t   \right\} \right) \ = \ \lim_{\eta \to \infty } \int \frac{\eta^2}{x^2 + \eta^2} \, \indfct[  F(x+i0)  \geq t]  \, dx \, . 
	\end{equation} 
	 The proof is  based on the observation that the distribution of the random variable $ F(x+i0) $ with respect to the Cauchy probability measure $\frac{\eta}{ x^2 + \eta^2} \,  \frac{dx}{\pi} $ is uniquely characterized by its characteristic function, which by contour integration is:
	 \begin{equation}\label{eq:char}
	 	\int e^{i \tau F(x+i0)} \frac{\eta}{ x^2 + \eta^2} \,  \frac{dx}{\pi} = e^{i\tau \Re F(i\eta)} \, e^{- |\tau | \,  \Im F(i\eta) }  \, ,
	 \end{equation}
where the integral was evaluated for $ \tau > 0 $ using a contour integration argument using the analyticity of $ F$ in the upper half plane $ \C^+ $. For $ \tau < 0 $ the characteristic function is obtained by complex conjugation from the one for $ \tau > 0 $ (since $F(x+i0)$ is real).  Equation~\eqref{eq:char} shows that with respect to the Cauchy probability measure, the distribution of the variable $ F(x+i0) $ is itself Cauchy centered at $ \Re F(i\eta) $ of width $  \Im F(i\eta)  $. As a consequence,
	 \begin{align}
	 	\lim_{\eta \to \infty } \int \frac{\eta}{x^2 + \eta^2} \, \indfct[  F(x+i0)  \geq t]  \, dx \ &  = \   \lim_{\eta \to \infty} \frac{\eta \, \Im F(i\eta) }{ t - \Re F(i\eta)} 
			\ = \ \frac{\mu_F(\R)}{t} \, , 
	 \end{align}
	 since $ \lim_{\eta \to \infty} \Re  F(i\eta) =  \lim_{\eta \to \infty}   \Im F(i\eta)= 0 $ and $  \lim_{\eta \to \infty}  \eta \, \Im F(i\eta)  = \mu_F(\R) $. 
\end{proof}

%
%
%
%
%

\section*{Acknowledgments} 
We thank A. Knowles, L. Erd\"os, Y.V. Fyodorov and O. Zeitouni for relevant comments.    
M. Aizenman was supported in parts by the NSF grant PHY-1104596      
and by the Weston Visiting Professorship at  the Weizmann Institute of Science;  S. Warzel  was supported in part by the von Neumann Visiting Professorship  at the Princeton Institute for Advanced Study.  
%

 \end{document}